\newenvironment{proof}{{\bf Proof:  }}{\hfill\rule{2mm}{2mm}}
\newenvironment{proofof}[1]{{\bf Proof of #1:  }}{\hfill\rule{2mm}{2mm}}
\numberwithin{figure}{section}
\numberwithin{equation}{section}
\newtheorem{theorem}{Theorem}[section]
\newtheorem{corollary}[theorem]{Corollary}
\newtheorem{lemma}[theorem]{Lemma}
\newtheorem{claim}[theorem]{Claim}
\newtheorem{example}[theorem]{Example}
\newcommand{\qed}{\hfill$\Box$}
\providecommand{\Appendix}{}
\renewcommand{\Appendix}[2][?]{%
        \refstepcounter{section}%
        \vspace{\parskip}%
        {\flushright\large\bfseries\appendixname\ \thesection: #1}%
        \vspace{\baselineskip}%
}
\renewcommand{\appendix}{%
        \newpage
        \renewcommand{\section}{\secdef\Appendix\Appendix}%
        \renewcommand{\thesection}{\Alph{section}}%
        \setcounter{section}{0}%
}
\newcommand{\eps}{\epsilon}
\newcommand{\vp}{\varphi}
\newcommand{\R}{\mathbb{R}}
\newcommand{\spn}{\operatorname{span}}
\newcommand{\T}{\ensuremath{\mathsf{T}}\xspace}
\newcommand{\W}{\ensuremath{\mathsf{W}}\xspace}
\newcommand{\Wm}{\ensuremath{\mathsf{W}^{-1}}\xspace}
\newcommand{\Wh}{\ensuremath{\mathsf{W}^{\frac{1}{2}}}\xspace}
\newcommand{\Wmh}{\ensuremath{\mathsf{W}^{-\frac{1}{2}}}\xspace}
\newcommand{\Lo}{\ensuremath{\mathsf{L}\xspace}}
\newcommand{\Lc}{\ensuremath{\mathcal{L}}\xspace}
\newcommand{\D}{\ensuremath{\mathsf{D}\xspace}}
\newcommand{\Dc}{\ensuremath{\mathcal{D}}\xspace}
\newcommand{\M}{\ensuremath{\mathsf{M}\xspace}}
\newcommand{\I}{\ensuremath{\mathsf{I}\xspace}}
\newcommand{\ray}{\ensuremath{\mathsf{R}}\xspace}
\newcommand{\rayc}{\ensuremath{\mathcal{R}}\xspace}
\newcommand{\Ito}{It\={o}\xspace}
\newcommand{\w}[1]{\ensuremath{w_{#1}}}
\newcommand{\f}[1]{\ensuremath{f_{#1}}}
\newcommand{\g}[1]{\ensuremath{g_{#1}}}
\newcommand{\ra}{\rightarrow}
\newcounter{note}[section]
\newcommand{\initOneLiners}{%
    \setlength{\itemsep}{0pt}
    \setlength{\parsep }{0pt}
    \setlength{\topsep }{0pt}
}
\newcommand{\ignore}[1]{}
\newcommand*\samethanks[1][\value{footnote}]{\footnotemark[#1]}
\newcommand{\shortv}[1]{}
\title{Spectral Properties of Laplacian and Stochastic Diffusion Process for Edge Expansion in Hypergraphs}
\author{T-H. Hubert Chan\thanks{Department of Computer Science, the University of Hong Kong. {\texttt{\{hubert,zhtang,czzhang\}@cs.hku.hk}}} \and Zhihao Gavin Tang\samethanks \and Chenzi Zhang\samethanks}
\date{}
\begin{document}

\begin{titlepage}

\maketitle

\begin{abstract}
There has been recent work [Louis STOC 2015] to analyze the spectral properties of hypergraphs with respect to edge expansion.  In particular, a diffusion process is defined on a hypergraph such that within each hyperedge, measure flows from nodes having maximum weighted measure to those having minimum.  The diffusion process determines a Laplacian, whose spectral properties are related to the edge expansion properties of the hypergraph.

It is suggested that in the above diffusion process, within each hyperedge, measure should flow uniformly in the complete bipartite graph from nodes with maximum weighted measure to those with minimum.  However, we discover that this method has some technical issues.  First, the diffusion process would not be well-defined.  Second, the resulting Laplacian would not have the claimed spectral properties.

In this paper, we show that the measure flow between the above two sets of nodes must be coordinated carefully over different hyperedges in order for the diffusion process to be well-defined, from which a Laplacian can be uniquely determined.  Since the Laplacian is non-linear, we have to exploit other properties of the diffusion process to recover a spectral property concerning the ``second eigenvalue'' of the resulting Laplacian.  Moreover, we show that higher order spectral properties cannot hold in general using the current framework.

Inspired from applications in finance, we consider a stochastic diffusion process, in which each node also experiences Brownian noise from outside the system.  We show a relationship between the second eigenvalue and the convergence behavior of the process.  In particular, for the special case when the Brownian noise at each node has zero variance, the process reduces to the (deterministic) diffusion process within a closed system, and we can recover an upper bound on the mixing time in terms of the second eigenvalue.
\end{abstract}

\thispagestyle{empty}
\end{titlepage}

\section{Introduction}
\label{sec:intro}

Recently, spectral analysis of edge expansion
has been extended
from normal graphs to hypergraphs in a STOC 2015 paper~\cite{Louis15stoc}.
Let $H = (V,E)$ be a hypergraph on $n = |V|$ nodes
with non-negative edge weights $w : E \rightarrow \R_+$.
We say $H$ is a $k$-graph if every edge contains exactly $k$ nodes.  (Hence,
a normal graph is a 2-graph.)
Each node $v \in V$ has weight $w_v := \sum_{e \in E: v \in e} w_e$. A subset $S$ of nodes has
weight $w(S) := \sum_{v \in S} w_v$, and the edges it cuts is $\partial S := \{e \in E : $ $e$
intersects both $S$ and $V \setminus S \}$.
The \emph{edge expansion} of $S \subset V$
is defined as $\phi(S) := \frac{w(\partial S)}{w(S)}$.

In classical spectral graph theory, the edge expansion
is related to the \emph{discrepancy ratio}, which is defined as $\D_w(f)=\frac{\sum_{e\in E} \w{e} \max_{u,v\in e}{(\f{u}-\f{v})^2}}{\sum_{u \in V} \w{u} \f{u}^2}$,
for each non-zero vector $f \in \R^V$.
Observe that if $f$ is the indicator vector for a subset $S \subset V$,
then $\D_w(f) = \phi(S)$.

One often considers the transformation into
the \emph{normalized space} given by 
$x := \Wh f$, where $\W$ is the diagonal matrix
whose $(v,v)$-th entry is $w_v$ for $v \in V$.
The normalized discrepancy ratio is
$\Dc(x) := \D_w(\Wmh x) = \D_w(f)$.
A well-known result~\cite{chung1997spectral} is that the \emph{normalized
Laplacian} for a 2-graph can be defined as
$\Lc := \I - \Wmh A \Wmh$ (where $\I$ is the
identity matrix, and $A$ is the symmetric matrix
giving the edge weights)
such that its \emph{Rayleigh quotient}
$\rayc(x) := \frac{\langle x, \Lc x \rangle}{\langle x, x \rangle}$ coincides with $\Dc(x)$.
Since the matrix $\Lc$ is symmetric for $2$-graphs,
its eigenvalues are given by $\gamma_i$'s as follows.

\noindent \emph{Procedural Minimizers.}
Define $x_1 := \Wh \vec{1}$, where $\vec{1} \in \R^V$ is the all-ones vector; $\gamma_1 := \Dc(x_1) = 0$.
Suppose $\{(x_i, \gamma_i)\}_{i \in [k-1]}$ have
been constructed\footnote{In this paper, for a positive
integer $s$, we denote $[s] := \{1,2,\ldots, s\}$.}.  Define $\gamma_k := \min \{ \Dc(x) : \vec{0} \neq x \perp \{x_i:
i \in [k-1]\}\}$, and $x_k$ to be any such minimizer that attains $\gamma_k = \Dc(x_k)$.

One of the main results in~\cite{Louis15stoc}
is an attempt to define a Laplacian $\Lc$ for a hypergraph
and relate its spectral properties with the $\gamma_i$'s produced
by the corresponding procedural minimizers.
However, we have discovered some technical issues
with their construction and proofs, which we outline
as follows.

\noindent \emph{Defining Operator via Diffusion Process.}
The nodes have measure that is indicated by a vector $\vp \in \R^V$
in the \emph{measure space}.
As in the case for 2-graphs, in the equilibrium distribution,
the measure at each node is proportional to its weight.  Hence,
we consider the weighted vector $f := \Wm \vp$,
and for each edge $e \in E$, its discrepancy $\Delta_e := \max_{u,v \in e} (f_u - f_v)$ indicates how far the measure for nodes in $e$
is from the equilibrium.  We can imagine that nodes in each $e \in E$ have formed
a pact such that if the discrepancy~$\Delta_e$ is non-zero,
then some measure should flow from the nodes~$S_e$ having maximum
$f$ values in $e$ to the nodes~$I_e$ having minimum $f$ values in $e$.
Moreover, the total rate of measure flow due to~$e$ is $c_e := w_e \cdot \Delta_e$.
One can view this as distributing the weight $w_e$ of edge $e$ among
pairs in the bipartite graph $S_e \times I_e$ to produce
a symmetric matrix $A_f$, whose
$(u,v)$-th entry is the weight collected by the pair $\{u,v\}$ from
all edges $e \in E$.
As we shall see in Lemma~\ref{lemma:ray_disc}, this ensures that
the Rayleigh quotient of the resulting Laplacian 
$\Lc x := (\I -  \Wmh A_f \Wmh)x$, where $x = \Wh f$,
coincides with the
discrepancy ratio.

\noindent \textbf{Issues with Hyperedge Weight Distribution in~\cite{Louis15stoc}.}
It is explained in~\cite{Louis15stoc} that if $|S_e| \times |I_e| > 1$,
then the weight $w_e$ cannot be given to just one pair $(u,v) \in S_e \times I_e$.
Otherwise, for the case $|S_e| \geq 2$, after infinitesimal time, 
among nodes in $S_e$, only $\vp_u$ (and $f_u$) will decrease due to $e$.
This means $u$ will no longer be in $S_e$ after infinitesimal time,
and we have to pick another node in $S_e$ immediately. However,
we can run into the same problem again if we try to pick another node
from $S_e$, and the diffusion process cannot continue.
There are a couple of suggested approaches.

\begin{compactitem}
\item[1.] In the conference version~\cite{Louis15stoc},
a pair $(u,v)$ is picked randomly in $S_e \times I_e$ to receive
the weight~$w_e$.  However, this produces an operator that returns
a random vector.  If one considers its expectation, then
this is equivalent to distributing the weight $w_e$ evenly among
the pairs in $S_e \times I_e$; this is the approach
explicitly stated in the arXiv version~\cite{louis2014hypergraph}.
We denote by $\overline{\Lc}$ the normalized Laplacian achieved by this method.

\item[2.] Both versions mentioned that a ``suitably weighted'' bipartite
graph on $S_e \times I_e$ can be added.  However, it is ambiguous whether
this refers to the bipartite graph with uniform edge weights as above or some other weights.  In any case, in~\cite[Theorem 4.6]{louis2014hypergraph},
the proof assumes that even when $\vp$ and $f$ change continuously,
there can only be a finite number of resulting $A_f$'s.
\end{compactitem}

We have discovered the following issues using the above weight distribution approaches.

\begin{compactitem}
\item[1.] \textbf{Diffusion process is not well-defined.}
We illustrate an issue if
the weight $w_e$ is distributed evenly among pairs
in $S_e \times I_e$.
In Example~\ref{eg:Louis},
there is an edge $e_5 = \{a, b, c\}$
such that the node in $I_{e_5} = \{c\}$
receives measure from the nodes in $S_{e_5} = \{a,b\}$.
However, node $b$ also gives some measure to node 
$d$ because of the edge $e_2 = \{b, d\}$.
In the example, all nodes have the same weight.
Now, if $w_{e_5}$ is distributed evenly among $\{a,c\}$ and $\{b, c\}$,
then the measure of $a$ decreases more slowly than that of $b$
because $b$ loses extra measure due to $e_2$.
Hence, after infinitesimal time, $b$ will no longer be in $S_{e_5}$.
This means that the measure of $b$ should not have been decreased at all due to $e_5$,
contradicting the choice of distributing $w_{e_5}$ evenly.

\item[2.] \textbf{Spectral properties of $\overline{\Lc}$ are not as claimed.}
Using the same Example~\ref{eg:Louis},
we can show that for all non-zero minimizers $x_2$
attaining $\gamma_2 := \min_{\vec{0} \neq x \perp x_1} \Dc(x)$,
where $x_1 := \Wh \vec{1}$, the vector
$x_2$ is not an eigenvector of the operator $\overline{\Lc}$
or even $\Pi_{x_1^\perp} \overline{\Lc}$,
where $\Pi_S$ is the projection operator into the subspace spanned by $S$.
This is contrary to~\cite[Proposition 2.7]{Louis15stoc}
and the proof of~\cite[Theorem~4.7]{louis2014hypergraph}.

\item[3.] \textbf{Matrix $A_f$ changes continuously.} As we see later,
using the framework of distributing weight $w_e$
among pairs $S_e \times I_e$, one needs to consider continuous change
in $A_f$.  Hence, one cannot conclude that there is some non-empty time
interval such that $A_f$ stays the same.  However, this conclusion
is needed in the proof of~\cite[Theorem~4.6]{louis2014hypergraph}.
Incidentally, the lemma~\cite[Lemma~B.2]{louis2014hypergraph} used to prove this conclusion is also inaccurate,
because one can consider a function on the interval $[0,1]$ that maps
rationals to 1 and irrationals to 2, and observe that any non-empty interval
contains both rationals and irrationals.
\end{compactitem}

\subsection{Our Contributions and Results}

We also consider a diffusion process
by using the hyperedge weight distribution framework~\cite{Louis15stoc}
described above,
and show that indeed there is a suitable way to distribute
the weight of every hyperedge~$e$ among pairs in the bipartite
graph $S_e \times I_e$.  We resolve the above issues
by showing that the diffusion process is well-defined and
a unique normalized Laplacian can be determined, whose
spectral properties can be related to the discrepancy ratio.
Our main result is as follows.

\begin{theorem}[Diffusion Process and Laplacian]
\label{th:main1}
Using the hyperedge weight distribution framework~\cite{Louis15stoc}
to consider a diffusion process, a unique normalized Laplacian $\Lc$ (that
is not necessarily linear)
can be defined on the normalized space such that the
following holds.
\begin{compactitem}
\item[1.] For all $\vec{0} \neq x \in \R^V$, the Rayleigh quotient $\frac{\langle x, \Lc x \rangle}{\langle x, x \rangle}$ coincides with the discrepancy ratio $\Dc(x)$.

\item[2.] There is an operator $\Lo := \Wh \Lc \Wmh$ on the measure space
such that the diffusion process
can be described by the differential equation $\frac{d \vp}{d t} = - \Lo \vp$.

\item[3.] Any procedural minimizer $x_2$ attaining $\gamma_2 := \min_{\vec{0} \neq
x \perp \Wh \vec{1}} \Dc(x)$ satisfies $\Lc x_2 = \gamma_2 x_2$.

\item[4.] For some hypergraph,
for all procedural minimizers $\{x_1, x_2\}$,
any procedural minimizer $x_3$ attaining
$\gamma_3 := \min_{\vec{0} \neq
x \perp \{x_1, x_2\}} \Dc(x)$ is not
an eigenvector of $\Pi_{\{x_1, x_2\}^\perp} \Lc$.
\end{compactitem}
\end{theorem}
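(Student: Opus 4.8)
The plan is to build everything on one object: for each $f = \Wm\vp$, the correct symmetric matrix $A_f$ that distributes the hyperedge weights. I would first observe that the off-diagonal part of $A_f$ is only constrained to spread each $w_e$ as a fractional bipartite graph over $S_e\times I_e$, and that the diagonal is then forced by requiring the row sums of $A_f$ to equal the node weights, i.e.\ $A_f\vec{1} = \W\vec{1}$ (equivalently $(\W - A_f)\vec{1} = 0$). Any such $A_f$ already yields the Rayleigh identity of Part~1: since $(\W - A_f)\vec 1 = 0$ and $A_f$ is symmetric, $\langle x,\Lc x\rangle = f^\top(\W - A_f)f = \tfrac12\sum_{u\neq v}(A_f)_{uv}(f_u - f_v)^2$, and because every off-diagonal pair carrying weight lies in some $S_e\times I_e$ with $f_u - f_v = \pm\Delta_e$, this collapses to $\sum_e w_e\Delta_e^2$, matching the numerator of $\Dc(x)$ by Lemma~\ref{lemma:ray_disc}. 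What is not yet pinned down is the choice among admissible off-diagonal distributions; the resolution is to select, at each instant, the distribution whose induced rate vector $-\tfrac{d\vp}{dt}$ has minimum $\W^{-1}$-norm over the polytope of feasible flows. This minimum-norm element is unique by strict convexity, it is exactly the distribution that keeps every node of a given $S_e$ (resp.\ $I_e$) decreasing (resp.\ increasing) at a common rate so that no node prematurely leaves its set, and it makes the right-hand side of the diffusion single-valued. This is what resolves the ``not well-defined'' issue from the introduction and produces a unique, generally non-linear $\Lc$.

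Part~2 is then immediate: defining $\Lo := \Wh\Lc\Wmh$ gives $\tfrac{d\vp}{dt} = -\Lo\vp$ by construction. For Part~3 I would use first-order optimality of the procedural minimizer. Writing the numerator $Q(x) := \langle x,\Lc x\rangle = \sum_e w_e\Delta_e^2$ as a convex, positively $2$-homogeneous function, the canonical $\Lc x$ is a selected subgradient of $\tfrac12 Q$. Since $x_2$ minimizes $\Dc$ on the sphere inside $x_1^\perp$, the stationarity condition in every feasible direction $y\perp x_1$ reads $\langle \Lc x_2 - \gamma_2 x_2, y\rangle = 0$, so $\Lc x_2 = \gamma_2 x_2 + c\,x_1$ for some scalar $c$. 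Testing against $x_1$ and using $(\W - A_{x_2})\vec 1 = 0$ gives $\langle \Lc x_2, x_1\rangle = f_2^\top(\W - A_{x_2})\vec 1 = 0$, while $\langle x_2, x_1\rangle = 0$; together these force $c = 0$ and hence $\Lc x_2 = \gamma_2 x_2$. The single active constraint $x_1$, together with the row-sum property that places $x_1$ in the kernel of $\Lc$, is precisely what keeps this argument clean.

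Part~4 is the negative result, and the key realization is why the level-$2$ argument cannot be repeated at level $3$. If $\Lc x_3$ equalled $\gamma_3 x_3$ modulo $\spn\{x_1,x_2\}$, then projecting would give $\Pi_{\{x_1,x_2\}^\perp}\Lc x_3 = \gamma_3 x_3$ and $x_3$ would be an eigenvector; so I must engineer an instance where the canonical subgradient $\Lc x_3$ selected by the diffusion is not the subgradient demanded by optimality. This is possible exactly because of non-differentiability: I would construct a hypergraph in which every minimizer $x_3$ attaining $\gamma_3$ is forced to sit at a point where some hyperedge has $|S_e|\times|I_e| > 1$, so that $\Dc$ is non-smooth there and its subdifferential is genuinely multi-valued. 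To keep the computation explicit I would take all node weights equal, so $\W = \I$ and $f = \vp = x$, in the spirit of Example~\ref{eg:Louis}; compute $x_1 = \vec 1$; characterize all minimizers $x_2$ of $\Dc$ on $x_1^\perp$; and for each admissible pair $\{x_1,x_2\}$ compute $\gamma_3$ and its minimizer $x_3$. The decisive step is to evaluate $\Lc x_3 = (\I - A_{x_3})x_3$ with the min-norm $A_{x_3}$, apply $\Pi_{\{x_1,x_2\}^\perp}$, and exhibit two coordinates whose ratio differs from the corresponding ratio in $x_3$, ruling out $\Pi_{\{x_1,x_2\}^\perp}\Lc x_3 = \lambda x_3$ for every $\lambda$.

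I expect the main obstacle throughout to be evaluating the non-linear $\Lc$ at a prescribed point, since $A_f$ is defined implicitly as the minimum-norm feasible flow rather than by a closed formula; at a tie point this requires solving the coordination problem exactly and consistently across all hyperedges that share a node. For Part~4 there is the additional burden of the universal quantifier over minimizers: the transverse component of $\Pi_{\{x_1,x_2\}^\perp}\Lc x_3$ must be nonzero for \emph{every} admissible pair $\{x_1,x_2\}$ and every associated $x_3$. I would control this by designing the instance so that its symmetry group pins the minimizers $x_2$ and $x_3$ down to an essentially unique orbit, reducing the ``for all'' to a single verification, and then checking by direct computation that the selected min-norm subgradient at $x_3$ disagrees with the optimality subgradient.
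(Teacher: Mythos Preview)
Your route for Parts~2 and~3 is genuinely different from the paper's, and Part~3 as written has a real gap.

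For Part~2 the paper does not use a min-norm selection. It imposes an explicit infinitesimal-consistency rule, their Rule~\textsf{(R3)}: a node $u$ may lose measure due to $e$ only if $r_u=\max_{v\in S_e}r_v$, and dually for $I_e$. It then proves uniqueness of $r$ by a combinatorial density-peeling argument on each equivalence class of equal-$f$ nodes (find the maximal $T$ maximizing $\delta(T)=\frac{c(I_T)-c(S_T)}{w(T)}$, set $r|_T\equiv\delta(T)$, recurse), and separately argues existence and that the resulting ODE is well-posed. Your min-norm choice is a clean alternative and in fact yields the same $r$ (this is the min-norm point of the relevant base polytope, recovered exactly by the paper's greedy decomposition), but you have not proved that equivalence, and your parenthetical that min-norm ``keeps every node of a given $S_e$ decreasing at a common rate'' is neither what happens nor what is required; what is required is precisely~\textsf{(R3)}, and deducing it from the min-norm property needs an argument you have not given. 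You also still owe the well-posedness of the ODE over time.

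The sharper issue is Part~3. You assert that ``the stationarity condition in every feasible direction $y\perp x_1$ reads $\langle \Lc x_2 - \gamma_2 x_2, y\rangle = 0$,'' but $Q(x)=\sum_e w_e\Delta_e^2$ is nonsmooth, so optimality only guarantees that \emph{some} $g\in\tfrac12\partial Q(x_2)$ satisfies $g-\gamma_2 x_2\in\spn(x_1)$; it says nothing about the particular min-norm subgradient you chose as $\Lc x_2$. This is fixable: every $g\in\tfrac12\partial Q(x_2)$ obeys $\langle g,x_2\rangle=Q(x_2)=\gamma_2\|x_2\|^2$ by Euler for $2$-homogeneous convex functions, whence Cauchy--Schwarz gives $\|g\|\ge\gamma_2\|x_2\|$ with equality iff $g=\gamma_2 x_2$; so the optimality subgradient is the min-norm one and your $\Lc x_2=\gamma_2 x_2$ follows. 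But as written the step is circular. The paper sidesteps this entirely by a dynamic argument: it computes $\tfrac{d}{dt}\rayc(x)=-\tfrac{2}{\|x\|^4}(\|x\|^2\|\Lc x\|^2-\langle x,\Lc x\rangle^2)\le 0$ along the flow (using the identity $\|r\|_w^2=\sum_e c_e(r_I(e)-r_S(e))$ produced by the peeling construction), notes the flow stays in $x_1^\perp$, and concludes that a strict decrease at $x_2$ would contradict minimality.

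For Part~4 the paper simply exhibits an explicit instance (four nodes, two hyperedges, and $\W\neq\I$: node $b$ has weight~$2$), verifies by hand that $f_2$ is unique up to scaling and that all $f_3$ lie in a two-element symmetry orbit, computes $\Lo_w f_3$ via the \textsf{(R3)} procedure, and checks $\Pi^w_{F_2^{\perp_w}}\Lo_w f_3\notin\spn(f_3)$. Your plan to rely on a tie point forcing $|S_e|\cdot|I_e|>1$ is the right intuition, but it remains a plan.
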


The first three statements are proved in Lemmas~\ref{lemma:ray_disc}, 
\ref{lemma:define_lap} and Theorem~\ref{th:hyper_lap}.
The fourth statement is proved by Example~\ref{eg:gamma3},
and suggests that the current approaches cannot be generalized
to consider higher order eigenvalues of the Laplacian $\Lc$.
In other words, one cannot hope to achieve~\cite[Proposition 2.7]{Louis15stoc}
using the current framework, as our Laplacian $\Lc$ is uniquely determined.  On the other hand,
we remark that in our third statement,
\emph{any} minimizer $x_2$ attaining $\gamma_2$
is an eigenvector of $\Lc$, whereas in~\cite[Theorem~4.6]{louis2014hypergraph}, it is only claimed
that there is \emph{some} minimizer $x_2$ that is an eigenvector.

\noindent \textbf{Our Techniques.}  As seen in Example~\ref{eg:Louis}, the trickiest part
in defining the diffusion process is when
nodes in the same hyperedge $e$ have the same
$f$ value.  We consider an equivalence relation
on $V$, where nodes in the same equivalence class $U$
have the same $f$ value.  The crucial part in the
analysis is to decide after infinitesimal time,
which nodes in $U$ will remain in the same equivalence class,
and which ones will go separate ways.
Suppose at the moment, the subset $X \subseteq U$ will have the maximum rate of change
in their $f$ values.
Then, after infinitesimal time, the nodes in $X$
will have larger $f$ value than
the rest of $U$.  Hence, the nodes in $X$ can only receive measure
from the set $I_X$ of hyperedges $e$ such that $I_e \subseteq X$,
but they will lose measure due to
the set $S_X$ of hyperedges $e$ such that $S_e \cap X \neq \emptyset$.
Hence, 
for $X \subseteq U$,
we consider a density function $\delta(X) := \frac{c(I_X) - c(S_X)}{w(X)}$.
In Lemma~\ref{lemma:define_lap},
it turns out that the maximal set $T \subseteq U$ having maximum $\delta(T)$ is well-defined and unique,
and a careful argument shows that all nodes in $T$ indeed have their rate of change of $f$ value
being $\delta(T)$.

Having defined the normalized Laplacian $\Lc$,
we extract a structural property of the diffusion process
to derive an exact expression for the rate of change
of the Rayleigh quotient $\rayc(x)$, which is non-positive,
and attains 0 \emph{iff} $\Lc x \in \spn(x)$.
In Theorem~\ref{th:hyper_lap}, we argue that any minimizer $x_2$ attaining $\gamma_2$ must have zero rate of change
of Rayleigh quotient at the moment, thereby showing
that $x_2$ is an eigenvector of $\Lc$.

\noindent \textbf{Stochastic Diffusion Process.}
The diffusion process defined so far is deterministic,
and no measure enters or leaves the system.  We believe that
it will be of independent interest to consider
the case when each node can experience independent noise
from outside the system, for instance, in risk management 
applications~\cite{merton1969lifetime,merton1971optimum}.  Since
the diffusion process is continuous in nature,
we consider Brownian noise.

For some $\eta \geq 0$, we assume that
the noise experienced by each node $u$
follows the Brownian motion whose rate of variance
is $\eta w_u$.  Then, the measure $\Phi_t \in \R^V$
of the system is an \Ito process
defined by the stochastic
differential equation $d \Phi_t = - \Lo \Phi_t \, dt + \sqrt{\eta} \cdot \Wh \, d B_t$.
For $\eta = 0$, this reduces to the (deterministic) diffusion process in a closed system, and we can recover the upper bound on the \emph{mixing time} in terms
of $\gamma_2$ in Corollary~\ref{cor:mixing},
which is also claimed in~\cite{Louis15stoc}.

The interesting question is whether such a relationship
between $\gamma_2$ and the system's convergence behavior
can be extended to the stochastic diffusion process.  Given a measure vector $\vp \in \R^V$,
we denote by $\vp^*$ the equilibrium vector
obtained by distributing the total measure $\sum_{v \in V} \vp_v$ among nodes proportional to their weights.
Hence, $\|\vp - \vp^*\|_1$ gives a metric of how far
the measure vector $\vp$ is from the equilibrium.
We show the following theorem in Section~\ref{sec:stochastic} that relates $\gamma_2$ with the convergence behavior of the process.

\begin{theorem}[Convergence and Laplacian]
Suppose the stochastic diffusion process is given
by the equation $d \Phi_t = - \Lo \Phi_t \, dt + \sqrt{\eta} \cdot \Wh \, d B_t$,
with some initial measure $\Phi_0$.  Then,
as $t$ tends to infinity,
$\|\Phi_t - \Phi_t^*\|_1^2$ is stochastically dominated by $\frac{\eta \cdot w(V)}{2 \gamma_2} \cdot \chi^2(n)$,
where $\chi^2(n)$ is the
chi-squared distribution with $n$ degrees of freedom.
In particular, $\lim_{t \ra \infty} E[ \|\Phi_t - \Phi_t^*\|_1]
\leq \sqrt{\frac{\eta n \cdot w(V)}{2 \gamma_2}}$.
\end{theorem}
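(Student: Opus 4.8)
The plan is to pass to the normalized space, reduce the $\ell_1$-quantity to the squared Euclidean norm of the component of the state orthogonal to $x_1$, and then control that scalar quantity through a one-dimensional stochastic comparison.

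First I would change variables to $X_t := \Wmh \Phi_t$. Since $\Lo = \Wh \Lc \Wmh$, the stochastic differential equation becomes $dX_t = -\Lc X_t\,dt + \sqrt{\eta}\,dB_t$, i.e.\ the noise becomes isotropic. The equilibrium $\Phi_t^*$ allocates the instantaneous total measure $\langle X_t, x_1\rangle$ proportionally to the weights, so $\Phi_t^* = \Wh X_t^*$ with $X_t^* = \Pi_{x_1} X_t$; hence $\Phi_t - \Phi_t^* = \Wh Y_t$, where $Y_t := \Pi_{x_1^\perp} X_t$. By Cauchy--Schwarz, $\|\Phi_t - \Phi_t^*\|_1 = \sum_v \sqrt{w_v}\,|(Y_t)_v| \le \sqrt{w(V)}\,\|Y_t\|_2$, so $\|\Phi_t-\Phi_t^*\|_1^2 \le w(V)\,R_t$ holds pathwise, where $R_t := \|Y_t\|_2^2$. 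It therefore suffices to show that, as $t \to \infty$, $R_t$ is stochastically dominated by $\frac{\eta}{2\gamma_2}\chi^2(n)$.

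Second, I would derive the dynamics of $R_t$, driven by two structural facts. (i) The diffusion operator is translation-invariant along the equilibrium direction: because the measure flow is driven only by the differences $f_u - f_v$, adding a multiple of $x_1$ (a constant to $f$) leaves the flow unchanged, so $\Lc(x + c\,x_1) = \Lc(x)$ for every $c$, and in particular $\Lc X_t = \Lc Y_t$. (ii) Combining this with the Rayleigh-quotient identity $\langle x,\Lc x\rangle = \Dc(x)\,\|x\|^2$ (the first part of Theorem~\ref{th:main1}) and $Y_t \perp x_1$, which forces $\Dc(Y_t) \ge \gamma_2$, yields $\langle Y_t, \Lc X_t\rangle = \langle Y_t,\Lc Y_t\rangle = \Dc(Y_t)\|Y_t\|^2 \ge \gamma_2 R_t$. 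Applying \Ito's formula to the smooth function $x \mapsto \langle x, \Pi_{x_1^\perp} x\rangle$ (so the non-smoothness of $\Lc$ never enters the chain rule), using $\operatorname{tr}\Pi_{x_1^\perp} = n-1$ for the correction term, I would obtain
\[
 dR_t \le \bigl(-2\gamma_2 R_t + \eta(n-1)\bigr)\,dt + 2\sqrt{\eta R_t}\,d\beta_t,
\]
where $\beta_t$ is a one-dimensional Brownian motion arising from the time change of the martingale part (whose quadratic variation is $4\eta R_t\,dt$).

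Third, I would compare $R_t$ with the squared norm $\tilde R_t := \|U_t\|_2^2$ of the $(n-1)$-dimensional Ornstein--Uhlenbeck process $dU_t = -\gamma_2 U_t\,dt + \sqrt{\eta}\,d\tilde B_t$, which satisfies the same equation but with equality and with the same diffusion coefficient $2\sqrt{\eta r}$. By the classical one-dimensional comparison theorem for stochastic differential equations, valid precisely because this common diffusion coefficient is H\"older-$\tfrac12$, one gets $R_t \le \tilde R_t$ almost surely. The stationary law of $U_t$ is $\mathcal{N}\bigl(0,\tfrac{\eta}{2\gamma_2} I_{n-1}\bigr)$, so $\tilde R_t$ converges in distribution to $\tfrac{\eta}{2\gamma_2}\chi^2(n-1)$, which is itself stochastically dominated by $\tfrac{\eta}{2\gamma_2}\chi^2(n)$. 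Taking $\limsup_t$ of $\Pr[R_t > a]$ then gives the asserted domination of $\|\Phi_t-\Phi_t^*\|_1^2$ by $\tfrac{\eta\,w(V)}{2\gamma_2}\chi^2(n)$, and the expectation bound follows from $\E[\|\cdot\|_1] \le \sqrt{\E[\|\cdot\|_1^2]}$ together with $\E[\chi^2(n)] = n$.

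The main obstacle is that $\Lc$ is genuinely non-linear, so one cannot diagonalize it or invoke linear SDE theory; the idea that unlocks the proof is that only the scalar inequality $\langle Y_t, \Lc X_t\rangle \ge \gamma_2 R_t$ is ever needed, which collapses the whole analysis onto the one-dimensional radial process $R_t$. The two genuinely delicate points are then (a) justifying the SDE comparison despite the non-Lipschitz square-root diffusion coefficient --- here the H\"older-$\tfrac12$ regularity is exactly what the classical comparison theorems permit --- and (b) the well-posedness of the $X_t$-equation itself given the non-smooth drift, together with the interchange of $\limsup_t$ with the convergence of the Ornstein--Uhlenbeck process to stationarity.
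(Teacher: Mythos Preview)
Your proposal is correct and follows the same high-level strategy as the paper: pass to the normalized space, reduce via Cauchy--Schwarz to $R_t = \|\Pi X_t\|_2^2$, apply \Ito's formula, use the variational characterization of $\gamma_2$ to bound the drift, and compare with an Ornstein--Uhlenbeck process whose stationary squared norm is a scaled chi-squared. The technical execution differs in two small but noteworthy ways. First, the paper keeps the martingale part in its $n$-dimensional form $2\sqrt{\eta}\,\langle \Pi X_t, dB_t\rangle$ and couples the pair $(Y_t,\widehat Y_t)$ directly with an $n$-dimensional OU process $\widehat X_t$, exploiting the spherical symmetry of the noise to match the martingale increments whenever $Y_t=\widehat Y_t$; you instead collapse the martingale to one dimension via the representation $2\sqrt{\eta R_t}\,d\beta_t$ and then invoke the classical 1D comparison theorem for SDEs with H\"older-$\tfrac12$ diffusion coefficients. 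Your route is arguably cleaner and more rigorous, since the paper's coupling is only sketched. Second, you keep the sharp \Ito\ correction $\eta(n-1)=\eta\,\mathrm{tr}\,\Pi_{x_1^\perp}$ and compare with an $(n-1)$-dimensional OU, obtaining the tighter intermediate $\tfrac{\eta}{2\gamma_2}\chi^2(n-1)$ before dominating by $\chi^2(n)$; the paper relaxes $n-1$ to $n$ immediately (its Hessian computation gives $\sum_u(1-w_u/w(V))=n-1$, which it bounds by $n$) and compares directly with an $n$-dimensional OU, landing on $\chi^2(n)$ in one step. Neither the paper nor your proposal fully addresses well-posedness of the $X_t$-SDE under the non-smooth drift $-\Lc$, so your flagging of point (b) is fair but not a gap relative to the paper.
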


\emph{Organization of Paper.} The definition of the
diffusion process and the spectral properties of the normalized Laplacian are given in Section~\ref{sec:laplacian}.
For better readability, the examples are given in Appendix~\ref{sec:example}.

\subsection{Related Work}

Naturally, the most related work is the recent
STOC paper by Louis~\cite{Louis15stoc},
which includes a comprehensive review of the related
literature.  We only give a brief summary here.

\emph{Spectral analysis for 2-graphs.}
The spectral properties of 2-graphs in relation to cuts
have been studied extensively (see the background surveys~\cite{chung1997spectral,DBLP:journals/fttcs/MontenegroT05}).
In particular, the Cheeger inequality for bi-partitions~\cite{cheeger1970lower, alon1986eigenvalues, alon1985lambda} has been recently
generalized to higher order Cheeger inequalities
for multi-partitions~\cite{DBLP:conf/approx/LouisRTV11,DBLP:conf/stoc/LouisRTV12,
DBLP:conf/stoc/KwokLLGT13,DBLP:journals/jacm/LeeGT14,DBLP:conf/soda/LouisM14}.

\emph{Algorithm and complexity results relating to cut problems.}  Besides analyzing cut properties using spectral techniques, there have been numerous works
on algorithms~\cite{DBLP:journals/jacm/AroraRV09,DBLP:conf/stoc/RaghavendraST10,DBLP:journals/siamcomp/BansalFKMNNS14} and complexity~\cite{DBLP:conf/stoc/RaghavendraS10} results related to cuts and small set expansion.

\emph{Cut Problems on Hypergraphs.}  Various cut problems
have been generalized to hypergraphs~\cite{lawler1973cutsets,DBLP:conf/esa/EneN14,louis2014approximation}, which have applications such as 
VLSI circuit design~\cite{DBLP:journals/tvlsi/KarypisAKS99}.

\emph{Laplacians for Hypergraphs.}  There are several attempts
to define Laplacians for hypergraphs, but their properties
are not applicable to the discrepancy ratio $\D_w$ defined
to capture hypergraph set expansion.  For instance,
Chung~\cite{chung1993laplacian} defined a Laplacian with very different
structural properties relating to homologies.  Friedman and Wigderson~\cite{DBLP:journals/combinatorica/FriedmanW95} considered spectral properties
of tensors for hypergraphs, but it is shown~\cite{louis2014hypergraph}
that they are not related to hypergraph expansion.
Rodriguez~\cite{DBLP:journals/appml/Rodriguez09} considered replacing
a hyperedge by a clique, but this approach measures how even each hyperedge is cut,
while hyperedge expansion just concerns whether a hyperedge is cut or not.

\emph{Min-max operator in process definition.}  In the definition
of our diffusion process, observe that the rule to determine
which nodes are currently actively participating depends on 
the minimum or the maximum of some attributes.  This characteristic
has also appeared in the context of bargaining networks~\cite{DBLP:conf/wine/CelisDP10} and another notion of Laplacian~\cite{peres2009tug}.





\noindent \emph{Stochastic diffusion process.} This is a well-studied subject,
with applications in physics and finance.  The reader can refer to 
standard textbooks~\cite{gardiner1985handbook,oksendalSDE} for the relevant background.

%

\noindent \emph{Other results in~\cite{Louis15stoc}.}
Although there are issues with the diffusion process
and the spectral properties of the Laplacian,
the algorithmic results concerning the procedural minimizers,
sets with small edge expansion and sparsest cut with general demands
can still hold.
Moreover, the results on the higher order Cheeger inequalities
can also be rephrased
using only properties of the discrepancy ratios $\D_w$ or $\Dc$ (but without
connection to the diffusion process or the spectral properties of the Laplacian) via the following parameters.

\noindent \textbf{Orthogonal Minimaximizers.}  
Define $\xi_k := \min_{x_1, \ldots x_k} \max_{i \in [k]} \Dc(x_i)$
and $\zeta_k := \min_{x_1, \ldots x_k}  \max \{\Dc(x) : x \in \spn\{x_1, \ldots x_k\}\}$, where the minimum is over $k$ non-zero mutually orthogonal vectors $x_1, \ldots, x_k$ in
the normalized space.  (All involved minimum and maximum 
can be attained because $\Dc$ is continuous and all vectors could be
chosen from the surface of a unit ball, which is compact.)

For 2-graphs, the three parameters $\xi_k = \gamma_k = \zeta_k$
coincide with the eigenvalues of the normalized Laplacian~$\Lc$.  
Indeed,
most proofs in the literature concerning expansion and Cheeger inequalities  (e.g.,~\cite{DBLP:journals/jacm/LeeGT14,DBLP:conf/stoc/KwokLLGT13,Louis15stoc})
just need to use the underlying properties of $\gamma_k$, $\xi_k$ and $\zeta_k$ with
respect to the discrepancy ratio,
without explicitly using the spectral properties of the Laplacian.
For hypergraphs, we show in Example~\ref{ex:gamma_nonunique} that
the sequence $\{\gamma_k\}$ might not even be unique.
However, the three parameters
can be related to one another in the following lemma,
whose proof is in Appendix~\ref{sec:min}.

\begin{lemma}[Comparing Discrepancy Minimizers]
\label{lemma:min}
Suppose $\{\gamma_k\}$ is some sequence produced by the procedural minimizers.
For each $k \geq 1$, $\xi_k \leq \gamma_k \leq \zeta_k \leq k \xi_k$.  
In particular, $\gamma_2 = \zeta_2$, but
it is possible that $\xi_2 < \gamma_2$.
\end{lemma}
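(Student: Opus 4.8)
The plan is to prove the chain $\xi_k \le \gamma_k \le \zeta_k \le k\,\xi_k$ one link at a time, then upgrade the middle inequality to the equality $\gamma_2 = \zeta_2$, and finally exhibit a hypergraph realizing $\xi_2 < \gamma_2$.

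For $\xi_k \le \gamma_k$ I would simply feed the procedural minimizers into the definition of $\xi_k$: the vectors $x_1,\dots,x_k$ are mutually orthogonal and non-zero, and because each $\gamma_i$ is a minimum of $\Dc$ over a feasible set that only shrinks as $i$ grows, the sequence $\gamma_1 \le \cdots \le \gamma_k$ is non-decreasing, so $\max_{i\in[k]}\Dc(x_i) = \gamma_k$; as $\xi_k$ is the minimum of this maximum over all orthogonal $k$-tuples, $\xi_k \le \gamma_k$. For $\gamma_k \le \zeta_k$ I would run the standard Courant--Fischer dimension count: if $y_1,\dots,y_k$ attain $\zeta_k$, then $\spn\{y_1,\dots,y_k\}$ (dimension $k$) meets $\{x_1,\dots,x_{k-1}\}^{\perp}$ (dimension $n-k+1$) in a non-zero vector $z$, which satisfies $\Dc(z)\le\zeta_k$ as an element of the span and $\Dc(z)\ge\gamma_k$ as a vector orthogonal to $x_1,\dots,x_{k-1}$.

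For $\zeta_k \le k\,\xi_k$ I would take orthogonal $x_1,\dots,x_k$ attaining $\xi_k$ (so $\Dc(x_i)\le\xi_k$ for all $i$), set $f_i := \Wmh x_i$, and control an arbitrary $y=\sum_i a_i x_i$ in their span via $g=\Wmh y=\sum_i a_i f_i$. The crux is the per-edge estimate
$$\max_{u,v\in e}\Big(\sum_i a_i (f_{i,u}-f_{i,v})\Big)^2 \;\le\; k\sum_i a_i^2\,\max_{u,v\in e}(f_{i,u}-f_{i,v})^2,$$
obtained from Cauchy--Schwarz and then moving the maximum inside the sum. Weighting by $w_e$ and summing bounds the numerator of $\D_w(g)$ by $k\sum_i a_i^2$ times the numerator of $\D_w(f_i)$, namely $k\sum_i a_i^2\,\Dc(x_i)\|x_i\|^2 \le k\,\xi_k\sum_i a_i^2\|x_i\|^2$, while orthogonality makes the denominator exactly $\sum_i a_i^2\|x_i\|^2$; the ratio is therefore at most $k\,\xi_k$.

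The one substantive improvement is $\gamma_2=\zeta_2$: since $\gamma_2\le\zeta_2$ is already in hand, it suffices to certify $\zeta_2\le\gamma_2$ with a single orthogonal pair, for which I would use $x_1=\Wh\vec1$ together with a procedural minimizer $x_2$ attaining $\gamma_2$. For $y=a x_1 + b x_2$ the normalized vector is $g=\Wmh y=a\vec1+b f_2$ with $f_2=\Wmh x_2$; adding the constant $a\vec1$ alters no coordinate difference, so the numerator of $\D_w(g)$ is exactly $b^2$ times that of $\D_w(f_2)$, while $x_2\perp x_1$ removes the cross term and leaves denominator $a^2 w(V)+b^2\|x_2\|^2 \ge b^2\|x_2\|^2$; hence $\Dc(y)\le\Dc(x_2)=\gamma_2$ for every $y$ in the span. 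This step hinges on having the zero-discrepancy constant direction $\Wh\vec1$ available as one of the two vectors, which is exactly why it is confined to $k=2$. To realize $\xi_2<\gamma_2$, I would take the single-edge hypergraph on $\{1,2,3\}$ with $e=\{1,2,3\}$ and $w_e=1$, so $\W=\I$ and $\Dc(f)=(\max_i f_i-\min_i f_i)^2/\|f\|^2$; minimizing over $f\perp\vec1$ gives $\gamma_2=\tfrac32$ (attained at $(2,-1,-1)$ up to permutation), yet the orthogonal pair $u=(1,0,0)$, $v=(0,1,1)$ inside $\{f:f_2=f_3\}$ has $\Dc(u)=1$ and $\Dc(v)=\tfrac12$, so $\xi_2\le 1<\tfrac32$. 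I expect this example to be the main obstacle: one must recognize that the extreme-value (rather than quadratic) nature of the hyperedge discrepancy leaves room for two mutually orthogonal non-constant directions to simultaneously beat $\gamma_2$ --- something impossible for $2$-graphs, where the quadratic form forces $\xi_2=\gamma_2$.
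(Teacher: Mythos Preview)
Your proof is correct and mirrors the paper's argument essentially step for step: the same candidate-feeding for $\xi_k\le\gamma_k$, the same dimension count for $\gamma_k\le\zeta_k$, the same Cauchy--Schwarz-on-the-numerator for $\zeta_k\le k\xi_k$, and the same ``add a multiple of $\vec 1$'' trick for $\gamma_2=\zeta_2$. The only difference is the witness for $\xi_2<\gamma_2$: the paper uses a four-node, five-edge hypergraph (with a self-loop and a triple), whereas your single-hyperedge $\{1,2,3\}$ example is simpler and just as valid---your computation of $\gamma_2=\tfrac32$ and the orthogonal pair $(1,0,0),(0,1,1)$ giving $\xi_2\le 1$ checks out.
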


\section{Preliminaries}
\label{sec:prelim}

Without loss of generality, we assume that all nodes
have positive weights, since any node with zero weight can be removed.
We use $\R^V$ to denote the set of column vectors.
Given $f \in \R^V$,
we use $f_u$ or $f(u)$ (if we need to use the subscript to distinguish between different vectors) to indicate the coordinate
corresponding to $u \in V$.
We use $A^\T$ to denote the transpose of a matrix $A$.
We use three isomorphic spaces described as follows.

\noindent \textbf{Weighted Space.} This is the space
associated with the discrepancy ratio $\D_w$ to consider
edge expansion.
For $f,g \in \R^V$, the inner product
is defined as $\langle f, g \rangle_w := f^\T \W g$,
and the associated norm is $\| f \|_w := \sqrt{\langle f, f \rangle_w}$.  We use $f \perp_w g$ to mean $\langle f, g \rangle_w = 0$.

\noindent \textbf{Normalized Space.} 
Given $f \in \R^V$ in the weighted space,
the corresponding vector in the normalized space is $x := \Wh f$.
In the normalized space, the usual $\ell_2$ inner product and norm are used.
Observe that if $x$ and $y$ are the corresponding
normalized vectors for $f$ and $g$ in the weighted space,
then $\langle x, y \rangle = \langle f, g \rangle_w$.

\noindent \textbf{Measure Space.}  This is the space
associated with the diffusion process.
Given $f$ in the weighted space,
the corresponding vector in the measure space
is given by $\vp := \W f$.  
Observe that a vector in the measure space can have negative coordinates.
We do not consider inner product explicitly in this space, and 
so there is no special notation for it.
However, we use the $\ell_1$-norm, which is not
induced by an inner product.
For vectors $\vp_i = \Wh x_i$,
an application of the Cauchy-Schwarz inequality implies that
$\|\vp_1 - \vp_2\|_1 \leq \sqrt{w(V)} \cdot \|x_1 - x_2\|_2$.

\noindent \textbf{Notation.}
We use the Roman letter $f$ for vectors in
the weighted space, $x$ for vectors in the
normalized space, and Greek letter $\vp$
for vectors in the measure space.
Observe that 
an operator defined on one space
induces operators on
the other two spaces.
For instance, if $\Lo$ is an operator defined on the measure space,
then $\Lo_w := \Wm \Lo \W$  is the corresponding
operator on the weighted space 
and $\Lc := \Wmh \Lo \Wh$ is the one on the normalized space.
Moreover, all three operators have the same eigenvalues.
The Rayleigh quotients are defined
as $\ray_w(f) := \frac{\langle f, \Lo_w f \rangle_w}{\langle f, f \rangle_w}$
and $\rayc(x) := \frac{\langle x, \Lc x \rangle}{\langle x, x \rangle}$. For $\Wh f = x$, we have $\ray_w(f) = \rayc(x)$.

Given a set $S$ of vectors in the normalized
space, $\Pi_S$ is the orthogonal projection operator
onto the subspace spanned by $S$.  The orthogonal
projection operator $\Pi^w_S$ can also be defined for
the weighted space.

\section{Defining Diffusion Process and Laplacian for Hypergraphs}
\label{sec:laplacian}

A classical result in spectral graph theory
is that for a $2$-graph whose edge weights
are given by the adjacency matrix $A$,
the parameter $\gamma_2 := \min_{\vec{0} \neq x \perp  \Wh \vec{1}} \Dc(x)$ is an eigenvalue of the normalized
Laplacian \mbox{$\Lc := \I - \Wmh A \Wmh$}, 
where a corresponding minimizer $x_2$ is an eigenvector
of $\Lc$.
Observe that $\gamma_2$ is also an eigenvector
on the operator $\Lo_w := \I - \Wm A$ induced on the weighted space.
However, 
in the literature, the (weighted) Laplacian is defined
as $\W - A$, which is $\W \Lo_w$ in our notation.  Hence,
to avoid confusion, we only consider the normalized Laplacian in this paper.

In this section, we generalize the result to hypergraphs.
Observe that any result for the normalized space has an equivalent counterpart in the weighted space, and vice versa.

\begin{theorem}[Eigenvalue of Hypergraph Laplacian]
\label{th:hyper_lap}
For a hypergraph with edge weights $w$,
there exists a normalized Laplacian $\Lc$ such that the
normalized discrepancy ratio $\Dc(x)$ coincides
with the corresponding Rayleigh quotient $\rayc(x)$.
Moreover, 
the parameter $\gamma_2 := \min_{\vec{0} \neq x \perp  \Wh \vec{1}} \Dc(x)$ is an eigenvalue of $\Lc$,
where any minimizer $x_2$ is a corresponding eigenvector.
\end{theorem}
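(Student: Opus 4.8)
The plan is to construct the Laplacian $\Lc$ via the diffusion process, establish that its Rayleigh quotient equals the discrepancy ratio, and then argue that a $\gamma_2$-minimizer must be an eigenvector by analyzing the rate of change of the Rayleigh quotient along the diffusion flow.

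First I would handle the construction of $\Lc$ and the Rayleigh-quotient identity. Given $f = \Wm\vp$, for each hyperedge $e$ we have $S_e = \arg\max_{v\in e} f_v$ and $I_e = \arg\min_{v\in e} f_v$, with flow rate $c_e = w_e\Delta_e$. The weight $w_e$ must be distributed among pairs in $S_e\times I_e$ to form the symmetric matrix $A_f$, and then $\Lc x := (\I - \Wmh A_f\Wmh)x$ where $x=\Wh f$. The identity $\rayc(x) = \Dc(x)$ should follow directly from the definition: the numerator $\langle x,\Lc x\rangle = \langle f, f\rangle_w - f^\T A_f f$, and by construction $f^\T A_f f = \sum_{e} w_e\, f_{s(e)} f_{i(e)}$-type terms that telescope so that the whole numerator equals $\sum_e w_e\max_{u,v\in e}(f_u-f_v)^2$. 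This is the content I would defer to the Rayleigh-quotient lemma referenced as Lemma~\ref{lemma:ray_disc}, which holds regardless of how the weight is split among the pairs.

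The delicate part — and the main obstacle — is showing that the weight distribution can be chosen so that the diffusion process is well-defined, which is necessary for $\Lc$ to be uniquely determined. The trouble, as the introduction flags, arises when several nodes in a hyperedge share the same $f$-value: distributing $w_e$ naively causes their $f$-values to diverge immediately, invalidating the assumed composition of $S_e$ and $I_e$. I would resolve this using the equivalence-relation approach sketched in the paper: partition $V$ into classes of equal $f$-value, and within each class $U$, identify the maximal subset $T\subseteq U$ maximizing the density $\delta(X) := \frac{c(I_X)-c(S_X)}{w(X)}$, where $I_X$ (respectively $S_X$) collects hyperedges $e$ with $I_e\subseteq X$ (respectively $S_e\cap X\neq\emptyset$). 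The key claim to prove is that this maximizer $T$ is unique and that every node in $T$ genuinely has rate of change $\delta(T)$, so that the nodes in $T$ stay together while separating consistently from the rest. I would establish uniqueness by a submodular-type exchange argument (if two distinct maximizers existed, their union would have density at least as large, contradicting maximality), and the self-consistency by checking that no proper subset of $T$ can do strictly better, which pins down the flow on each pair. This makes the derivative $\frac{d\vp}{dt}$ well-defined and hence determines $\Lo = \Wh\Lc\Wmh$, so the process obeys $\frac{d\vp}{dt} = -\Lo\vp$. I expect this uniqueness-and-consistency argument to be the crux, since it is precisely where the earlier approaches broke down.

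Finally, to show $\Lc x_2 = \gamma_2 x_2$, I would study the Rayleigh quotient $\rayc(x_t)$ along the diffusion trajectory starting from $x_2$. The strategy is to derive an exact expression for $\frac{d}{dt}\rayc(x_t)$ and show it is always non-positive, attaining $0$ if and only if $\Lc x \in \spn(x)$. Since $x_2$ minimizes $\Dc(x) = \rayc(x)$ over the codimension-one subspace $\{x : x\perp\Wh\vec{1}\}$, and since the diffusion keeps total measure conserved so that the trajectory remains orthogonal to $\Wh\vec{1}$ (at least infinitesimally, as the equilibrium component is fixed), the value $\rayc(x_t)$ cannot decrease below its minimum $\gamma_2$. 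Hence the derivative at $x_2$ must be zero, forcing $\Lc x_2 \in \spn(x_2)$; combined with $\rayc(x_2) = \gamma_2$ this gives $\Lc x_2 = \gamma_2 x_2$. The subtle point here is justifying that $x_2$ being a constrained minimizer yields a stationarity condition even though $\Lc$ is non-linear, so a standard Lagrange-multiplier argument is unavailable; instead the non-positivity of $\frac{d}{dt}\rayc$ together with the minimality does the work, and I would take care that the orthogonality to $\Wh\vec{1}$ is preserved so that $x_2$ remains a feasible competitor throughout.
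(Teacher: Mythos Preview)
Your proposal is correct and follows essentially the same route as the paper: construct $\Lc$ via the diffusion process with the density function $\delta(X)=\frac{c(I_X)-c(S_X)}{w(X)}$ on equivalence classes (your union/submodularity argument for uniqueness of the maximizer is exactly the paper's Claim in Lemma~\ref{lemma:define_lap}), invoke Lemma~\ref{lemma:ray_disc} for $\rayc=\Dc$, and then conclude by showing $\frac{d}{dt}\rayc(x_t)\leq 0$ with equality iff $\Lc x\in\spn(x)$, combined with the conservation property $\Lc x\perp \Wh\vec{1}$ (Lemma~\ref{lemma:lap_proj}) so that the trajectory stays in the feasible subspace and minimality of $x_2$ forces the derivative to vanish. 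The only ingredient you leave implicit is how one actually computes $\frac{d}{dt}\langle f,\Lo_w f\rangle_w$: in the paper this uses the structural identity $\sum_e c_e(r_I(e)-r_S(e))=\|r\|_w^2$ extracted from the layer-peeling construction of $r$, which then gives the Cauchy--Schwarz form $\frac{d}{dt}\rayc = -\tfrac{2}{\|f\|_w^4}\bigl(\|f\|_w^2\|\Lo_w f\|_w^2-\langle f,\Lo_w f\rangle_w^2\bigr)$; you would need this (or an equivalent) to justify the ``equality iff colinear'' step.
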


However, contrary to what is claimed
in~\cite{Louis15stoc}, we show in Example~\ref{eg:gamma3}
that the above result for our Laplacian does not hold for $\gamma_3$.

\noindent \textbf{Intuition from Random Walk and Diffusion Process.}  Given a $2$-graph whose edge weights
$w$ are given by the (symmetric) matrix $A$,
we first illustrate the relationship between the Laplacian
and a diffusion process in an underlying measure space,
in order to gain insights on how to define the Laplacian
for hypergraphs.

Suppose $\vp \in \R^V$ is some measure on the nodes,
which, for instance, can represent a probability distribution
on the nodes.  A random walk on the graph
can be characterized by the transition matrix $\M := A \Wm$.
Observe that each column of $\M$ sums to 1,
because we apply $\M$ to the column vector $\vp$
to get the distribution $\M \vp$ after one step of the random walk.

We wish to define a continuous diffusion process.
Observe that, at this moment, the measure vector $\vp$ is moving
in the direction of $\M \vp - \vp = (\M - \I) \vp$.
Therefore, if we define an operator $\Lo := \I - \M$
on the measure space, we have
the differential equation $\frac{d \vp}{d t} = - \Lo \vp$.

To be mathematically precise, we are considering how $\vp$
will move in the future.  Hence, unless otherwise stated,
all derivatives considered are actually right-hand-derivatives
$\frac{d \vp(t)}{dt} := \lim_{\Delta t \ra 0^+} \frac{\vp(t + \Delta t) - \vp(t)}{\Delta t}$.

Using the transformation into
the weighted space $f = \Wm \vp$
and the normalized space $x = \Wmh \vp$,
we can define the corresponding operators
$\Lo_w := \Wm \Lo \W = \I - \Wm A$
and $\Lc := \Wmh \Lo \Wh = \I - \Wmh A \Wmh$,
which is exactly the normalized Laplacian for $2$-graphs.

\noindent \textbf{Generalizing the Diffusion Rule from 2-Graphs to
Hypergraphs.}  We
consider more carefully the
rate of change for the measure at a certain node $u$:
$\frac{d \vp_u}{d t} = \sum_{v : \{u,v\} \in E} w_{uv} (f_v - f_u)$, where $f = \Wm \vp$ is the weighted measure.
Observe that for a stationary distribution of the random walk,
the measure at a node $u$ should be proportional
to its (weighted) degree $w_u$.
Hence, given an edge $e = \{u,v\}$, by comparing the values $f_u$ and $f_v$, measure should move from the node with higher $f$ value
to the node with smaller $f$ value, at the rate given by $c_e := w_e \cdot |f_u - f_v|$.

To generalize this to a hypergraph $H = (V,E)$,
for $e \in E$ and measure $\vp$ (corresponding to $f = \Wm \vp$),
we define $I_e(f) \subseteq e$ as the nodes $u$
in $e$ whose $f_u = \frac{\vp_u}{w_u}$ are minimum, $S_e(f) \subseteq e$ as those whose corresponding values
are maximum, and $\Delta_e(f) := \max_{u,v \in E} (f_u - f_v)$
as the discrepancy within edge $e$.  Then,
the diffusion process obeys the following rules.

\begin{compactitem}
\item[\textsf(R1)] When the measure distribution is at state $\vp$ (where $f = \Wm \vp$), there can be a positive rate of measure flow
from $u$ to $v$ due to edge $e \in E$ only if  $u \in S_e(f)$ and $v \in I_e(f)$.
\item[\textsf(R2)] For every edge $e \in E$,
the total rate of measure flow \textbf{due to $e$} from nodes
in $S_e(f)$ to $I_e(f)$ is $c_e := w_e \cdot \Delta_e(f)$.
In other words,
the weight $w_e$ is distributed among $(u,v) \in S_e(f) \times I_e(f)$ such that 
for each such $(u,v)$,
there exists $a^e_{uv} = a^e_{uv}(f)$
such that $\sum_{(u,v) \in S_e \times I_e} a^e_{uv} = w_e$,
and
the rate of flow from $u$ to $v$ (due to $e$) is $a^e_{uv} \cdot \Delta_e$. (For ease of notation, we write $a^e_{uv} = a^e_{vu}$.)
Observe that if $I_e = S_e$, then $\Delta_e = 0$ and it does not matter how
the weight $w_e$ is distributed.
\end{compactitem}

Observe that the distribution of hyperedge weights will induce a symmetric matrix $A_f$
such that for $u \neq v$, $A_f(u,v) = a_{uv} := \sum_{e \in E} a^e_{uv} (f)$, and the diagonal entries are chosen such that entries in  the row corresponding to node $u$ sum to $w_u$.
Then, the operator $\Lo(\vp) := (\I - A_f \Wm) \vp$
is defined on the measure space to obtain
the differential equation $\frac{d \vp}{d t} = - \Lo \vp$.
As in the case for $2$-graph, we show in Lemma~\ref{lemma:ray_disc} that
the corresponding operator $\Lo_w$ on the weighted space
and the normalized Laplacian $\Lc$ are induced such that
$\D_w(f) = \ray_w(f)$ and $\Dc(x) = \rayc(x)$,
which hold no matter how the weight $w_e$ of hyperedge $e$
is distributed among edges in $S_e(f) \times I_e(f)$.

\begin{lemma}[Rayleigh Quotient Coincides with Discrepancy Ratio]
\label{lemma:ray_disc}
Suppose $\Lo_w$ on the weighted space is defined such that
rules \textsf{(R1)} and \textsf{(R2)} are obeyed.
Then, 
the Rayleigh quotient associated with $\Lo_w$ satisfies
that for any $f$ in the weighted space,
$\ray_w(f) = \D_w(f)$.
By considering the isomorphic normalized space,
we have for each $x$, $\rayc(x) = \Dc(x)$.
\end{lemma}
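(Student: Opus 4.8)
The plan is to reduce the identity $\ray_w(f) = \D_w(f)$ to the classical quadratic-form identity for a weighted graph Laplacian. Since $\Lo_w = \I - \Wm A_f$, the numerator of the Rayleigh quotient is $\langle f, \Lo_w f \rangle_w = f^\T \W (\I - \Wm A_f) f = f^\T (\W - A_f) f$, while the denominator is $\langle f, f \rangle_w = f^\T \W f = \sum_{u \in V} w_u f_u^2$, which already matches the denominator of $\D_w(f)$ verbatim. So everything hinges on showing $f^\T (\W - A_f) f = \sum_{e \in E} w_e \max_{u,v \in e}(f_u - f_v)^2$.

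First I would record that, by construction, $A_f$ is symmetric and its diagonal entries are fixed so that each row of $A_f$ sums to $w_u$; hence $\W - A_f$ has zero row sums and is precisely the combinatorial Laplacian of the weighted graph on $V$ whose pair $\{u,v\}$ carries weight $a_{uv} = \sum_{e \in E} a^e_{uv}(f)$. Applying the standard identity for such a Laplacian gives $f^\T (\W - A_f) f = \sum_{\{u,v\}} a_{uv}(f_u - f_v)^2$; this is a one-line expansion using symmetry and the zero-row-sum property, which I would include for completeness.

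Next I would expand $a_{uv} = \sum_{e} a^e_{uv}(f)$ and exchange the order of summation to group the contribution by hyperedge, obtaining $\sum_{\{u,v\}} a_{uv}(f_u-f_v)^2 = \sum_{e \in E} \sum_{\{u,v\}} a^e_{uv}(f)(f_u - f_v)^2$. The crucial point is that, by rule \textsf{(R1)}, $a^e_{uv}(f)$ is nonzero only when one endpoint lies in $S_e(f)$ and the other in $I_e(f)$, and for any such pair the definitions of $S_e$, $I_e$ and $\Delta_e$ force $(f_u - f_v)^2 = \Delta_e(f)^2$. Therefore each inner sum factors as $\Delta_e(f)^2 \sum_{\{u,v\}} a^e_{uv}(f) = w_e \, \Delta_e(f)^2$, where the last equality is rule \textsf{(R2)} (the distributed weights for edge $e$ total $w_e$). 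Summing over $e$ yields $\sum_{e} w_e \Delta_e(f)^2 = \sum_{e} w_e \max_{u,v \in e}(f_u - f_v)^2$, the numerator of $\D_w(f)$, which completes $\ray_w(f) = \D_w(f)$. The normalized-space statement $\rayc(x) = \Dc(x)$ then follows at once from the isomorphism: for $x = \Wh f$ we have $\ray_w(f) = \rayc(x)$ and $\Dc(x) = \D_w(f)$ by definition.

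I expect the computation to be routine rather than genuinely hard; the only point requiring care is the bookkeeping between ordered pairs $(u,v) \in S_e \times I_e$ (as the weights are indexed in \textsf{(R2)}) and unordered pairs $\{u,v\}$ (as they appear in the Laplacian quadratic form). Since $S_e(f)$ and $I_e(f)$ are disjoint whenever $\Delta_e(f) > 0$, each contributing unordered pair corresponds to exactly one ordered pair, so $\sum_{\{u,v\}} a^e_{uv}(f) = \sum_{(u,v) \in S_e \times I_e} a^e_{uv}(f) = w_e$; and when $\Delta_e(f) = 0$ the edge contributes $0$ to both sides, so the weight distribution is immaterial. This is exactly why the identity holds no matter how each $w_e$ is split among $S_e(f) \times I_e(f)$, which is the content of the closing sentence of the lemma.
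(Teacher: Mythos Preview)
Your proposal is correct and follows essentially the same route as the paper: reduce to $f^\T(\W - A_f)f = \sum_e w_e \Delta_e(f)^2$, use the standard Laplacian quadratic-form identity $\sum_{\{u,v\}} a_{uv}(f_u-f_v)^2$, expand $a_{uv}=\sum_e a^e_{uv}$, and invoke rules~\textsf{(R1)} and~\textsf{(R2)} to collapse each inner sum to $w_e\Delta_e^2$. If anything, your write-up is more careful than the paper's in handling the ordered/unordered pair bookkeeping and the degenerate case $\Delta_e(f)=0$.
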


\begin{proof}
It suffices to show that $\langle f, \Lo_w f \rangle_w =
\sum_{e \in E} w_e \max_{u,v \in e} (f_u - f_v)^2$.

Recall that $\vp = \W f$,
and $\Lo_w = \I - \Wm A_f$,
where $A_f$ is chosen as above
to satisfy rules~\textsf{(R1)} and~\textsf{(R2)}.

Hence,
it follows that

$\langle f, \Lo_w f \rangle_w = f^\T (\W - A_f) f
= \sum_{uv \in {V \choose 2}} a_{uv} (f_u - f_v)^2$

$= \sum_{uv \in {V \choose 2}} \sum_{e \in E: \{uv, vu\} \cap S_e \times I_e \neq \emptyset} a^e_{uv} (f_u - f_v)^2
= \sum_{e \in E} w_e \max_{u,v \in e} (f_u - f_v)^2,
$
as required.
\end{proof}

\subsection{Defining Diffusion Process to Construct Laplacian}
\label{sec:disp}

Recall that $\vp \in \R^V$ is the measure vector, where each coordinate
contains the ``measure'' being dispersed.  Observe that
we consider a closed system here, and hence $\langle \vec{1}, \vp \rangle$ remains invariant.
To facilitate the analysis, we also consider the weighted measure $f := \Wm \vp$.

Our goal is to define a diffusion process
that obeys rules~\textsf{(R1)} and~\textsf{(R2)}.
Then, the operator on the measure space
is given by $\Lo \vp := - \frac{d \vp}{d t}$.
By observing that the weighted space is achieved
by the transformation $f = \Wm \vp$,
the operator on the weighted space is
given by $\Lo_w f := - \frac{d f}{d t}$.
The Laplacian is induced $\Lc := \Wmh \Lo \Wh$
on the normalized space $x = \Wh f$.

Suppose we have the measure vector $\vp \in \R^V$
and the corresponding weighted vector $f = \Wm \vp$.
Observe that even though we call $\vp$ a measure vector,
$\vp$ can still have negative coordinates.
We shall construct a vector $r \in \R^V$ that is
supposed to be $\frac{df}{dt}$.
For $u \in V$ and $e \in E$,
let $\rho_u(e)$ be the rate of change of
the measure $\vp_u$ due to edge $e$.  Then, $\rho_u :=
\sum_{e \in E} \rho_u(e)$ gives the rate of change of $\vp_u$.

We show that $r$ and $\rho$ must satisfy certain
constraints because of
rules~\textsf{(R1)} and~\textsf{(R2)}.
Then, it suffices to show that there exists 
a unique $r \in \R^V$ that satisfies all the constraints.

First, since $\frac{df}{dt} = \Wm \frac{d \vp}{dt}$,
we have for each node $u \in V$,
$r_u = \frac{\rho_u}{w_u}$.

Rule \textsf{(R1)} implies the following constraint:

for $u \in V$ and $e \in E$,
$\rho_u(e) < 0$ only if $u \in S_e(f)$,
and $\rho_u(e) > 0$ only if $u \in I_e(f)$.

Rule \textsf{(R2)} implies the following constraint:

for each $e \in E$, we have
$\sum_{u \in I_e(f)} \rho_u(e) = - \sum_{u \in S_e(f)} \rho_u(e) = w_e \cdot \Delta_e(f)$.

Observe that for each $e \in E$,
once all the $\rho_u(e)$'s are determined,
the weight $w_e$ can be distributed among
edges in $S_e \times I_e$ by considering
a simple flow problem on the complete bipartite graph,
where each $u \in S_e$ is a source with
supply $-\frac{\rho_u(e)}{\Delta_e}$,
and each $v \in I_e$ is a sink with demand
$\frac{\rho_v(e)}{\Delta_e}$.  Then, from any feasible flow,
we can set $a^e_{uv}$ to be the flow along the edge $(u,v) \in S_e \times I_e$.

\noindent \textbf{Infinitesimal Considerations.}
In the previous discussion, we argue that
if a node $u$ is losing measure due to edge $e$,
then it should remain in $S_e$ for infinitesimal time,
which holds only if the rate of change of $f_u$ is the maximum among nodes in $S_e$.
A similar condition should hold if the node
$u$ is gaining measure due to edge $e$.  This translates
to the following constraints.

\noindent Rule \textsf{(R3)} First-Order Derivative Constraints:
\begin{compactitem}
\item If $\rho_u(e) < 0$, then $r_u \geq r_v$ for all $v \in S_e$.
\item If $\rho_u(e) > 0$, then $r_u \leq r_v$ for all $v \in I_e$.
\end{compactitem}

We remark that rule \textsf{(R3)}
is only a necessary condition in order
for the diffusion process to satisfy
rule~\textsf{(R1)}.  Even though $A_f$ might not be unique,
the following lemma shows that these rules
are sufficient to define a unique $r \in \R^V$.
Moreover, observe that if $f = \alpha g$ for some $\alpha > 0$, then
in the above flow problem to determine the symmetric matrix, we can still have $A_f = A_g$.
Hence, even though the resulting
$\Lo_w(f) := (\I - \Wm A_f) f$ might not be linear,
we still have \mbox{$\Lo_w (\alpha g) = \alpha \Lo_w(g)$}.

We also define
$r_S(e) := \max_{u \in S_e} r_u$ and $r_I(e) := \min_{u \in I_e} r_u$.

\begin{lemma}[Defining Laplacian from Diffusion Process]
\label{lemma:define_lap}
Given a measure vector $\vp \in \R^V$ (and $f = \Wm \vp$),
rules~\textsf{(R1)} to \textsf{(R3)}
uniquely determine $r \in \R^V$ (and $\rho = \W r$),
from which the operators $\Lo_w f := -r$
and \mbox{$\Lo \vp := - \W r$} are defined.
The normalized Laplacian is also induced
$\Lc := \Wmh \Lo \Wh$.

Furthermore, there is a diffusion process
that satisfies rules~\textsf{(R1)} to \textsf{(R3)},
and
can be described by the
differential equation: $\frac{df}{dt} = - \Lo_w f$.

Moreover,
$\sum_{e \in E} c_e (r_I(e) - r_S(e)) = \sum_{u \in V} \rho_u r_u = \|r\|^2_w$.
\end{lemma}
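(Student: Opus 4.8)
The statement has three parts: uniqueness of $r$, existence of a process following $\frac{df}{dt} = -\Lo_w f$, and the closed-form identity for $\|r\|_w^2$. The plan is to treat them in that order, with the bulk of the work in uniqueness. The key structural observation I would exploit is that the problem \emph{decouples across the level sets of} $f$: rules \textsf{(R1)} and \textsf{(R2)} fix the flow rate $c_e = \w{e}\Delta_e(f)$ on each edge purely from the current configuration, while rule \textsf{(R3)} only ever compares rates $r_u,r_v$ of nodes lying in a common $S_e$ or a common $I_e$ — and such nodes always share the same $f$-value. Hence it suffices to pin down the rates on a single equivalence class $U$ of nodes with equal $f$-value, using only data ($I_e$, $S_e$, $c_e$) determined by $\vp$.

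For a single class $U$, I would follow the density approach sketched in the introduction. For nonempty $X \sse U$ set $I_X := \{e : I_e \sse X\}$, $S_X := \{e : S_e \cap X \neq \emptyset\}$, $c(I_X) := \sum_{e \in I_X} c_e$, $c(S_X) := \sum_{e \in S_X} c_e$, and define the net supply $g(X) := c(I_X) - c(S_X)$ and density $\delta(X) := g(X)/w(X)$. First I would show $g$ is supermodular: the map $X \mapsto \mathbf{1}[I_e \sse X]$ is supermodular (containment) and $X \mapsto \mathbf{1}[S_e \cap X \neq \emptyset]$ is submodular (non-empty intersection), so $c(I_X)$ is supermodular, $-c(S_X)$ is supermodular, and hence so is their sum. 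Combining supermodularity of $g$ with modularity of $w$ and a standard exchange argument then shows the maximizers of $\delta$ are closed under union: if $\delta(X) = \delta(Y) = \delta^* := \max_Z \delta(Z)$, then $g(X\cup Y) + g(X\cap Y) \ge g(X) + g(Y) = \delta^*(w(X\cup Y) + w(X\cap Y))$, and since each of $g(X\cup Y) \le \delta^* w(X\cup Y)$ and $g(X\cap Y) \le \delta^* w(X\cap Y)$ must then hold with equality, $X \cup Y$ is again a maximizer. This yields a \emph{unique maximal maximizer} $T \sse U$, resolving the well-definedness asserted in the lemma.

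Next I would identify $T$ with the set of fastest-rising nodes. Let $M := \max_{u\in U} r_u$ and $T' := \{u\in U : r_u = M\}$. Rule \textsf{(R3)} forces that $T'$ can only receive along edges with $I_e \sse T'$ and can only lose along edges with $S_e \cap T' \neq \emptyset$ — where, moreover, the \emph{entire} outflow $c_e$ must issue from $T'$, since a node of $S_e \setminus T'$ would violate the ordering constraint — so the net rate into $T'$ equals exactly $c(I_{T'}) - c(S_{T'})$; as this net rate also equals $\sum_{u\in T'} \w{u} r_u = M\,w(T')$, we get $M = \delta(T') \le \delta^*$. For the reverse inequality I would bound the net rate into the maximal maximizer $T$: influx is at least $c(I_T)$ and outflux at most $c(S_T)$, so the net rate is at least $g(T) = \delta^* w(T)$; but it is also at most $M\,w(T)$, forcing $M \ge \delta^*$, hence $M = \delta^*$, and then every $u \in T$ attains $r_u = \delta^*$, giving $T \sse T'$, while maximality of $T$ among maximizers gives $T' \sse T$. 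Thus the fastest-rising set is exactly $T$, with common rate $\delta^*$. Freezing these values and recursing on $U \setminus T$ (with $T$ now acting as a strictly higher class, which only changes the bookkeeping of $I_e,S_e$) determines the remaining rates; this exhibits the level sets of $r$ on $U$ as a nested family and fixes $r$ uniquely.

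Finally, for existence I would note that within any regime in which the ordering of $f$-values and the combinatorial data are fixed, every $c_e = \w{e}\Delta_e(f)$ is linear in $f$, so $r = -\Lo_w f$ is piecewise linear and the equation $\frac{df}{dt} = -\Lo_w f$ admits a local solution that can be continued; I expect the care here to lie in continuation across moments when level sets merge, not in a genuine obstruction. The identity is then a direct computation: since $\rho_u = \w{u} r_u$, we have $\sum_u \rho_u r_u = \sum_u \w{u} r_u^2 = \|r\|_w^2$; and splitting by edge, \textsf{(R3)} forces every receiving node of $I_e$ to have $r_u = r_I(e)$ and every losing node of $S_e$ to have $r_u = r_S(e)$, so $\sum_{u} \rho_u(e)\, r_u = r_I(e)\, c_e - r_S(e)\, c_e$, and summing over $e$ gives $\sum_e c_e\,(r_I(e) - r_S(e))$. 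I expect the main obstacle to be the uniqueness argument — specifically establishing supermodularity of $g$ so that the maximal density set $T$ is well-defined, and then matching it with the fastest-rising set $T'$.
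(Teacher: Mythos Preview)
Your plan follows the paper's route: decompose by level sets of $f$, maximize the density $\delta(X) = (c(I_X)-c(S_X))/w(X)$ on each class $U$, show the maximal maximizer $T$ coincides with the fastest-rising set, and recurse. The supermodularity of $g$ is a clean packaging of the paper's direct union-closure claim, and your edge-by-edge derivation of the identity (via ``$\rho_u(e)>0 \Rightarrow r_u = r_I(e)$'' from \textsf{(R3)}) is in fact tidier than the paper's layer-based computation.

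There is, however, a genuine gap in existence. Your uniqueness argument is conditional: it shows that \emph{if} some $(\rho_u(e))$ satisfies \textsf{(R1)}--\textsf{(R3)} then $r$ is forced by the layer recursion; you never show that such a $\rho$ exists. Your ``existence'' paragraph is about the ODE and already presupposes $\Lo_w f = -r$ is defined, but without a feasible $\rho$ there is no $r$ and no operator. Concretely, having fixed the top layer $T$ with target rate $\delta^*$, you must still route the inflows $\{c_e : e \in I_T\}$ (each deliverable to any node of $I_e$) and outflows $\{c_e : e \in S_T\}$ (each drawable from any node of $S_e \cap T$) so that every $u \in T$ nets exactly $w_u \delta^*$. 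This is a nontrivial flow-feasibility claim. The paper proves it by a minimum-surplus argument: take a configuration minimizing total surplus; if the surplus were positive, the set $T' \subseteq T$ reachable (in an auxiliary transfer digraph) from positive-surplus nodes would satisfy $\delta(T') > \delta^*$, contradicting maximality of $\delta^*$. Your supermodularity of $g$ could drive a Hall/polymatroid-type alternative, but you have not supplied one. A smaller omission: well-posedness of the ODE also requires handling \emph{splitting} of a layer $T$ immediately after peeling (not just merging of level sets); the paper verifies that such a split $X \subsetneq T$ has $\delta(X) = \delta^* = \delta'(T\setminus X)$ in the residual, so $r$ does not jump --- your sketch does not address this.
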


\begin{proof}
Observe that if the nodes $u$ all have distinct values for $f_u$'s, then the problem is trivial.  Define an equivalence relation on $V$ such that $u$ and $v$ are in the same equivalence class \emph{iff} $f_u = f_v$.  For each such equivalence
class $U \subset V$,
define $I_U := \{e \in E: \exists u \in U, u \in I_e\}$
and $S_U := \{e \in E: \exists u \in U, u \in S_e\}$.
Notice that each $e$ is in exactly one such $I$'s and one
such $S$'s.

As remarked above,
for each $e \in E$, once all $\rho_u(e)$ is defined
for all $u \in S_e \cup I_e$,
it is simple to determine $a^e_{uv}$ for $(u,v) \in S_e \times I_e$ by considering a flow problem on the bipartite graph $S_e \times I_e$.

\noindent \textbf{Considering Each Equivalence Class $U$.}
We can consider each equivalence class $U$ independently by analyzing $r_u$ and
$\rho_u(e)$ for $u \in U$ and $e \in I_U \cup S_U$ that satisfy rules~\textsf{(R1)} to \textsf{(R3)}.

\noindent \textbf{Proof of Uniqueness.}
 Our idea is to show that if there is some $r$ that can
satisfy rules~\textsf{(R1)} to \textsf{(R3)},
then it must take a unique value.  We also give a (not necessarily efficient) procedure
to construct $r$ and the relevant $\rho$'s.

For each $e \in I_U \cup S_U$,
recall that $c_e := w_e \cdot \Delta_e(f)$,
which is the rate of flow due to $e$ into $U$ (if $e \in I_U$) or
out of $U$ (if $e \in S_U$).
For $F \subseteq I_U \cup S_U$, denote $c(F) := \sum_{e \in F} c_e$.

Suppose $T$ is the set of nodes that have
the maximum $r$ values within the equivalence class,
i.e., for all $u \in T$, $r_u = \max_{v \in U} r_v$.
Observe that to satisfy rule~\textsf{(R3)}, for $e \in I_U$,
there is positive rate $c_e$ of measure flow into $T$ due to $e$
\emph{iff} $I_e \subseteq T$; otherwise, the entire rate $c_e$ will flow
into $U \setminus T$.
On the other hand, for $e \in S_U$,
if $S_e \cap T \neq \emptyset$, then there is a rate $c_e$
of flow out of $T$ due to $e$; otherwise,
the rate $c_e$ flows out of $U \setminus T$.

Based on this observation,
we define for $X \subset U$,
$I_X := \{e \in I_U: I_e \subseteq X\}$
and $S_X := \{e \in S_U: S_e \cap X \neq \emptyset\}$.
Note that these definitions are consistent with $I_U$ and $S_U$.
We denote $C(X) := c(I_X) - c(S_X)$.

To detect which nodes in $U$ should have the largest $r$ values,
we define $\delta(X) := \frac{C(X)}{w(X)}$,
which, loosely speaking, is the average weighted (with respect to $\W$) measure rate going into nodes in $X$. 
Observe that if $r$ is feasible, then the definition of $T$
implies that for all $v \in T$, $r_v = \delta(T)$.

We first prove that there exists
a unique maximal set $P$ with maximum average weighted measure rate
$\delta_M := \max_{\emptyset \neq Z \subseteq U} \delta(Z)$.

\noindent \textbf{Claim.}  Suppose $X, Y \subseteq U$ such that
$\delta(X) = \delta(Y) = \delta_M$.
Then, $\delta(X \cup Y) = \delta_M$.

\noindent \emph{Proof.}  Observe that $c(I_{X \cup Y}) \geq c(I_X) + c(I_Y) - c(I_{X\cap Y})$, and $c(S_{X \cup Y}) = c(S_X) + c(S_Y) - c(S_{X\cap Y})$.

Hence, $\delta(X \cup Y) \geq \frac{\delta_M (w(X) + w(Y)) - C(X \cap Y) }{w(X) + w(Y) - w(X \cap Y)}$, recalling that $\delta_M = \frac{C(X)}{w(X)} = \frac{C(Y)}{w(Y)}$.

Since $\delta(X) = \delta \geq \frac{C(X \cap Y)}{w(X \cap Y)}$,
we have $C(X \cap Y) \leq \delta_M \cdot w(X \cap Y)$.
  Therefore, we have
$\delta(X \cup Y) \geq \frac{\delta_M (w(X) + w(Y)) - \delta_M \cdot w(X \cap Y)}{w(X) + w(Y) - w(X \cap Y)} = \delta_M$, as required. \qed

Define $P := \cup_{X \subseteq U: \delta(X) = \delta_M} X$.
In view of the above claim, $\delta(P) = \delta_M$.
We next show that if $r$ is a feasible solution,
then $T = P$.  Observe that for a feasible $r$,
there must be at least rate of $c(I_P)$ going into $P$,
and at most rate of $c(S_P)$ going out of $P$.
Hence, we have $\sum_{u \in P} w_u r_u \geq c(I_P) - c(S_P) =  w(P) \cdot \delta(P)$.
Therefore, there exists $u \in P$ such that
$\delta(P) \leq r_u \leq \delta(T)$, where
the last inequality holds because every node $v \in T$ has $r_v = \delta(T)$.
This implies that $\delta(T) = \delta_M$, $T \subseteq P$
and the maximum $r$ value is $\delta_M = \delta(T) = \delta(P)$.  Therefore,
the above inequality becomes 
$w(P) \cdot \delta_M \geq \sum_{u \in P} w_u r_u \geq w(P) \cdot \delta(P)$,
which means equality actually holds.  This implies that $T = P$.

\noindent \emph{Recursive Argument.} Hence, it follows that
if $r$ is feasible, then $T$ can be uniquely identified (as
the maximal set $T$ having maximum $\delta(T)$),
and we must have for all $v \in T$, $r_v = \delta(T)$.
Then, the uniqueness argument can be applied
recursively for the smaller instance with
$U' := U \setminus T$, $I_{U'} := I_U \setminus I_T$,
$S_{U'} := S_U \setminus S_T$.

\noindent \textbf{Proof of Existence.}  We show
that once $T$ is identified above (by computing $\delta(Z)$ for all non-empty $Z \subseteq U$).  It is possible to
assign for each $v \in T$ and edge $e$ where $v\in I_e \cup S_e$,
the values $\rho_v(e)$ such that $\delta_M = r_v = \sum_{e} \rho_v(e)$.

Consider an arbitrary configuration $\rho$
in which edge $e \in I_T$ supplies a rate of $c_e$ to nodes in $T$,
and each edge $e \in S_T$ demands a rate of $c_e$ from nodes in $T$.
Each node $v \in T$ is supposed to gather a net rate of $w_v \cdot \delta_M$,
where any deviation is known as the \emph{surplus} or \emph{deficit}.

Given configuration $\rho$, define a directed graph $G_\rho$ with nodes in $T$
such that there is an arc $(u,v)$ if non-zero measure rate can be transferred from $u$ to $v$.
This can happen in one of two ways:
(i) there exists $e \in I_T$ containing both $u$ and $v$ such that
$\rho_u(e) > 0$, or
(ii) there exists $e \in S_T$ containing both $u$ and $v$ such that
$\rho_v(e) < 0$.

Hence, if there is a directed path from a node $u$ with non-zero surplus
to a node $v$ with non-zero deficit, then the surplus at node $u$ (and the
deficit at node $v$) can be decreased.

We argue that a configuration $\rho$ with minimum surplus must have zero surplus.
(Observe that the minimum can be achieved because $\rho$ comes from a compact set.)
Otherwise, suppose there is at least one node with positive surplus,
and let $T'$ be all the nodes that are reachable from some node with positive surplus in the
directed graph $G_\rho$. Hence, it follows that
for all $e \notin I_{T'}$, for all $v \in T'$, $\rho_v(e) = 0$,
and for all $e \in S_{T'}$, for all $u \notin T'$, $\rho_u(e)=0$.
This means that the rate going into $T'$ is $c(I_{T'})$ and all comes from $I_{T'}$,
and the rate going out of $T'$ is $c(S_{T'})$.  Since no node
in $T'$ has a deficit and at least one has positive surplus,
it follows that $\delta(T') > \delta_M$, which is a contradiction.

After we have shown that a configuration $\rho$ with zero surplus exists,
it can be found by a standard flow problem, in which each $e \in I_{T}$
has supply $c_e$, each $v \in {T}$ has demand $w_v \cdot \delta_M$,
and each $e \in S_{T}$ has demand $c_e$.  Moreover, in the flow network,
there is a directed edge $(e,v)$ if $v \in I_e$ and $(v,e)$ if $v \in S_e$.
Suppose in a feasible solution, there is a flow
with magnitude $\theta$ along a directed edge.
If the flow is in the direction $(e,v)$, then
$\rho_v(e) = \theta$; otherwise, if it is in the direction $(v,e)$,
then $\rho_v(e) = - \theta$.

\noindent \emph{Recursive Application.}
To show that the feasibility argument
can be applied recursively to the smaller instance $U'$,
we need to prove that $\delta_M' := \max_{\emptyset \neq Q \subset U'} \delta'(Q) < \delta_M$,
where $\delta'$ is the analogous function defined on $(U', I_{U'}, S_{U'})$.

Suppose $\emptyset \neq Q \subseteq U'$.
Because $T$ is the maximal set with maximum $\delta(T) = \delta_M$,
it follows that $C(T \cup Q) < \delta_M \cdot w(T \cup Q)$.
Hence, we have
$\delta'(Q) = \frac{c(I_{T \cup Q} \setminus I_T) - c(S_{T \cup Q} \setminus S_T)}{w(Q)} = \frac{C(T \cup Q) - C(T)}{w(Q)} < \frac{\delta_M \cdot w(T \cup Q) - \delta_M \cdot w(T)}{w(Q)} = \delta_M$, as required.

\noindent \textbf{Diffusion process is well-defined.}  In order to show that the
diffusion process can be described by
the differential equation $\frac{df}{dt} = - \Lo_w f$,
it suffices to check that for any time $t$,
there exists some $\eps>0$ such that $\frac{df}{dt}$
is continuous in $[t, t + \eps)$.  Recall that
we consider right-hand derivative, and the above
uniqueness and existence proof determines 
$\frac{df}{dt}$ at some time $t$.  Moreover, even though
$\frac{df}{dt}$ might not be continuous, $f$~is always continuous as $t$ increases.

Observe that as long as the equivalence classes
induced by $f$ do not change, then each of them act as a super node, and hence $r = \frac{df}{dt}$
remains continuous.  Observe that nodes from different equivalence classes have different $f$ values.  Since
$f$ is continuous, there exists $\eps > 0$ such that
there is no merging of equivalence classes in time $[t, t + \eps)$.  However, it is possible that an equivalence
class $U$ can split at time~$t$.

The first case is that the equivalence class $U$ is peeled off layer by layer in the recursive manner described above, where the set $T$ is such a layer.  Hence, the various $T$'s from $U$
actually behave like separate equivalence sub-classes, and
each of them has their own $r$ value.

The second case is more subtle, when the nodes in $T$
actually have their $f$ values split at time $t$,
and do not stay in the same equivalence class
after infinitesimal time.
For instance, there could be
a proper subset $X \subsetneq T$ whose $r$ values might be
marginally larger than the rest after infinitesimal time.

The potential issue is that
if the nodes in $X$ go on their own, then
the nodes $X$ and also the nodes
in $T \setminus X$ might experience a sudden jump in their rate $r$.

Fortunately, this cannot happen, because
we must have $\delta_M = \delta(T) = \delta(X)$.
Hence, after nodes in $X$ go on their own,
they will take care of $c(I_X)$ and $c(S_X)$,
leaving behind $c(I_T \setminus I_X)$ and 
$c(S_T \setminus S_X)$ for the set $T \setminus X$.
Hence,
in the remaining instance,
we still have $\delta'(T \setminus X) =
\frac{c(I_T \setminus I_X) - c(S_T \setminus S_X)}{w(T \setminus X)}
=\frac{C(T) - C(X)}{w(T) - w(X)} = \delta_M$.
This shows that the $r$ value for the nodes in $X$ and also nodes in $T\setminus X$
cannot suddenly jump.

\noindent \textbf{Claim.} $\sum_{e \in E} c_e (r_I(e) - r_S(e)) = \sum_{u \in V} \rho_u r_u$.

Consider $T$ defined above with $\delta_M = \delta(T) = r_u$ for $u \in T$.

Observe that $\sum_{u \in T} \rho_u r_u = (c(I_T) - c(S_T)) \cdot \delta_M
= \sum_{e \in I_T} c_e \cdot r_I(e) - \sum_{e \in S_T} c_e \cdot r_S(e)$,
where the last equality is due to rule~\textsf{(R3)}.

Observe that every $u \in V$ will be in exactly one such $T$, and
every $e \in E$ will be accounted for exactly once in each of $I_T$ and $S_T$, ranging over all $T$'s.  Hence, summing over all $T$'s gives the result.
\end{proof}

\subsection{Spectral Properties of Laplacian}
\label{sec:eigen}

We next consider the spectral properties
of the normalized Laplacian $\Lc$ induced
by the diffusion process defined in Section~\ref{sec:disp}.

\begin{lemma}[First-Order Derivatives]
\label{lemma:deriv}
Consider the diffusion process satisfying rules~\textsf{(R1)}
to~\textsf{(R3)} on 
the measure space with $\vp \in \R^V$, which
corresponds to $f = \Wm \vp$ in the weighted space.
Suppose 
$\Lo_w$ is the induced operator on the weighted space such that
$\frac{d f}{d t} = - \Lo_w f$.
Then, we have the following derivatives.

\begin{compactitem}
\item[1.] $\frac{d \|f\|^2_w}{dt} = - 2 \langle f, \Lo_w f \rangle_w$.
\item[2.] $\frac{d \langle f, \Lo_w f \rangle_w}{dt} 
= - 2 \|\Lo_w f \|^2_w$.
\item[3.] Suppose $\ray_w(f)$ is the Rayleigh quotient
with respect to the operator $\Lo_w$ on the weighted space.
Then, for $f \neq 0$, $\frac{d \ray_w(f)}{dt} = -\frac{2}{\|f\|^4_w} \cdot
(\|f\|^2_w \cdot \|\Lo_w f\|^2_w - \langle f , \Lo_w f \rangle^2_w) \leq 0$,
by the Cauchy-Schwarz inequality
on the $\langle \cdot , \cdot \rangle_w$ inner product, where equality
holds \emph{iff} $\Lo_w f \in \spn(f)$.
\end{compactitem}

\end{lemma}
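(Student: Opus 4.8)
The plan is to prove the three identities in sequence, since statement~3 follows from the first two by the quotient rule, and then to read off the sign from Cauchy--Schwarz. Throughout I rely on Lemma~\ref{lemma:define_lap}, which guarantees that along the diffusion process $f$ is continuous and the right-hand derivative $\frac{df}{dt} = -\Lo_w f = r$ is well-defined. For statement~1, I would simply differentiate $\|f\|^2_w = \langle f, f\rangle_w$ by the product rule on the bilinear form $\langle\cdot,\cdot\rangle_w$, obtaining $\frac{d}{dt}\langle f,f\rangle_w = 2\langle \frac{df}{dt}, f\rangle_w = -2\langle \Lo_w f, f\rangle_w$, which equals $-2\langle f,\Lo_w f\rangle_w$ by symmetry of the inner product.

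For statement~2, the key is to use the closed form from Lemma~\ref{lemma:ray_disc}: $\langle f,\Lo_w f\rangle_w = \sum_{e\in E} w_e \Delta_e^2 = \sum_{e\in E} c_e \Delta_e$, where $c_e = w_e\Delta_e$ and $\Delta_e = \max_{u\in e}f_u - \min_{u\in e}f_u$. Differentiating term by term gives $\frac{d}{dt}\langle f,\Lo_w f\rangle_w = \sum_{e\in E} 2 c_e \frac{d\Delta_e}{dt}$. The crucial infinitesimal step is to show $\frac{d\Delta_e}{dt} = r_S(e) - r_I(e)$: among the nodes of $e$ currently attaining the maximum $f$-value (the set $S_e$), after infinitesimal time the maximum is carried by those of largest rate, so the right-hand derivative of $\max_{u\in e}f_u$ is $r_S(e) = \max_{u\in S_e}r_u$, and symmetrically that of $\min_{u\in e}f_u$ is $r_I(e)$. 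Substituting, $\frac{d}{dt}\langle f,\Lo_w f\rangle_w = -2\sum_{e\in E} c_e(r_I(e) - r_S(e))$, and the final identity of Lemma~\ref{lemma:define_lap} rewrites the sum as $\|r\|^2_w = \|\Lo_w f\|^2_w$ (since $\Lo_w f = -r$), yielding $-2\|\Lo_w f\|^2_w$.

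For statement~3, I would apply the quotient rule to $\ray_w(f) = \langle f,\Lo_w f\rangle_w/\|f\|^2_w$, substituting the derivatives of numerator and denominator from statements~2 and~1. Collecting terms gives $\frac{d\ray_w(f)}{dt} = -\frac{2}{\|f\|^4_w}\bigl(\|f\|^2_w\,\|\Lo_w f\|^2_w - \langle f,\Lo_w f\rangle^2_w\bigr)$. The Cauchy--Schwarz inequality for $\langle\cdot,\cdot\rangle_w$ shows the bracket is nonnegative, so the derivative is at most $0$, with equality exactly when $f$ and $\Lo_w f$ are $\langle\cdot,\cdot\rangle_w$-linearly dependent, i.e.\ $\Lo_w f\in\spn(f)$.

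I expect the main obstacle to be the identity $\frac{d\Delta_e}{dt} = r_S(e) - r_I(e)$, which requires justifying that the right-hand derivative commutes with the $\max$ and $\min$ over the nodes of each edge. This is precisely where I would lean on the well-definedness established in Lemma~\ref{lemma:define_lap}: because nodes that leave or enter $S_e$ or $I_e$ do so without causing the relevant extremal rate to jump, the discrepancy $\Delta_e$ is differentiable from the right with the claimed derivative.
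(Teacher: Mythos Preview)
Your proposal is correct and follows essentially the same route as the paper: differentiate $\|f\|_w^2$ directly for statement~1; for statement~2 use Lemma~\ref{lemma:ray_disc} to write $\langle f,\Lo_w f\rangle_w=\sum_e w_e\Delta_e^2$, differentiate through the max/min (the paper calls this the envelope theorem, you spell it out), and invoke the identity $\sum_e c_e(r_I(e)-r_S(e))=\|r\|_w^2$ from Lemma~\ref{lemma:define_lap}; then combine via the quotient rule and Cauchy--Schwarz for statement~3. Your explicit discussion of why $\frac{d\Delta_e}{dt}=r_S(e)-r_I(e)$ is in fact a bit more careful than the paper's one-line appeal to the envelope theorem.
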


\begin{proof}
For the first statement,
$\frac{d \|f\|_w^2}{d t} = 2 \langle f, \frac{d f}{d t} \rangle_w
= - 2 \langle f, \Lo_w f \rangle_w$.

For the second statement,
recall from Lemma~\ref{lemma:ray_disc}
that $\langle f, \Lo_w f \rangle_w = \sum_{e \in E} w_e \max_{u,v \in e} (f_u - f_v)^2$.
Moreover, recall also that
$c_e = w_e \cdot  \max_{u,v \in e} (f_u - f_v)$.
Recall that $r = \frac{df}{dt}$,
$r_S(e) = \max_{u \in S_e} r_u$
and $r_I(e) = \min_{u \in I_e} r_u$.

Hence, by the envelop theorem, $\frac{d \langle f, \Lo_w f \rangle_w}{dt} =
2 \sum_{e \in E} c_e \cdot (r_S(e) - r_I(e))$.
From Lemma~\ref{lemma:define_lap},
this equals $- 2 \|r\|^2_w = - 2 \|\Lo_w f\|^2_w$.

Finally, for the third statement, we have

$\frac{d}{dt} \frac{\langle f, \Lo_w f \rangle_w}{\langle f, f \rangle_w} = \frac{1}{\|f\|^4_w} (\| f \|^2_w \cdot \frac{d \langle f, \Lo_w f \rangle_w}{ dt} -  \langle f, \Lo_w f \rangle_w \cdot \frac{d \|f\|^2_w}{dt})
= -\frac{2}{\|f\|^4_w} \cdot
(\|f\|^2_w \cdot \|\Lo_w f\|^2_w - \langle f , \Lo_w f \rangle^2_w)$,
where the last equality follows from the first two statements.
\end{proof}

We next prove some properties
of the normalized Laplacian $\Lc$ with respect to orthogonal
projection in the normalized space.

\begin{lemma}[Laplacian and Orthogonal Projection]
\label{lemma:lap_proj}
Suppose $\Lc$ is the normalized Laplacian 
defined in Lemma~\ref{lemma:define_lap}.
Moreover, denote $x_1 := \Wh \vec{1}$, and
let $\Pi$ denote the orthogonal projection
into the subspace that is orthogonal to $x_1$.
Then, for all $x$, we have the following:
\begin{compactitem}
\item[1.] $\Lc(x) \perp x_1$,
\item[2.] $\langle x, \Lc x \rangle = \langle \Pi x, \Lc \Pi x \rangle$.
\end{compactitem}
\end{lemma}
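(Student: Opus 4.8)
The plan is to reduce both statements to two structural facts about the diffusion process: that it conserves total measure in the closed system, and that the induced operator is invariant under adding a constant to the weighted measure. In fact I would establish the stronger identity $\Lc x = \Lc(\Pi x)$ for every $x$, from which both displayed claims follow immediately.

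For the first statement, I would translate $\langle \Lc x, x_1 \rangle$ back to the measure space. Writing $\vp = \Wh x$ and using $\Lc = \Wmh \Lo \Wh$ together with $x_1 = \Wh \vec{1}$, the $\Wmh$ and $\Wh$ factors cancel and the inner product collapses to $\langle \vec{1}, \Lo \vp \rangle$. Since $\Lo \vp = -\frac{d\vp}{dt}$ and we are in a closed system, $\langle \vec{1}, \vp \rangle$ is invariant in time, so this quantity vanishes. Concretely the conservation comes from rule~\textsf{(R2)}: for each edge $e$ the rate flowing into $I_e$ exactly matches the rate leaving $S_e$, so $\sum_u \rho_u(e) = 0$, and summing over edges gives $\langle \vec{1}, \Lo \vp \rangle = \sum_u \rho_u = 0$. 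This yields $\Lc x \perp x_1$.

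For the second statement, the key observation is that the vector $r = \frac{df}{dt}$ produced by Lemma~\ref{lemma:define_lap} depends on $f$ only through the pairwise differences $f_u - f_v$: the discrepancies $\Delta_e$, the sets $S_e$ and $I_e$, the rates $c_e$, and the density function $\delta$ are all unchanged when $f$ is shifted by a constant. Hence $\Lo_w(f + c \vec{1}) = \Lo_w(f)$ for every scalar $c$. Now $\Pi x$ corresponds in the weighted space to $f - \alpha \vec{1}$, where $\alpha = \langle x, x_1 \rangle / \|x_1\|^2$, because $x_1$ corresponds to $\vec{1}$; and since $\Lc x = \Wh \Lo_w f$, the translation invariance gives $\Lc(\Pi x) = \Wh \Lo_w(f - \alpha \vec{1}) = \Wh \Lo_w(f) = \Lc x$. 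Combining this with the first statement, $\langle x, \Lc x \rangle = \langle \Pi x + \alpha x_1, \Lc x \rangle = \langle \Pi x, \Lc x \rangle = \langle \Pi x, \Lc \Pi x \rangle$, as required.

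The step requiring the most care is this translation invariance, precisely because $\Lc$ is nonlinear: I cannot expand $\Lc(\Pi x + \alpha x_1)$ by linearity, and must instead argue at the level of the diffusion rules that shifting $f$ by a constant leaves the entire determination of $r$ in Lemma~\ref{lemma:define_lap} unchanged. Everything else---the cancellation of the weight matrices and the use of the closed-system invariant---is routine once this invariance is in hand.
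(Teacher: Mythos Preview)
Your proof is correct. For part~1 it coincides with the paper's argument: both use conservation of total measure in the closed system to conclude $\langle \vec{1}, \Lo\vp\rangle = 0$, hence $\Lc x \perp x_1$.

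For part~2 your route differs from the paper's. The paper invokes Lemma~\ref{lemma:ray_disc}, which gives the explicit formula $\langle x,\Lc x\rangle = \sum_{e} w_e \max_{u,v\in e}(f_u-f_v)^2$; this expression is manifestly invariant under $f \mapsto f + c\vec{1}$, so $\langle x,\Lc x\rangle = \langle x+\alpha x_1,\Lc(x+\alpha x_1)\rangle$ and one takes $\alpha$ so that $x+\alpha x_1 = \Pi x$. You instead argue translation invariance at the level of the operator itself---tracing through the construction of $r$ in Lemma~\ref{lemma:define_lap} to see that $\Lo_w(f+c\vec{1}) = \Lo_w(f)$---and so obtain the strictly stronger identity $\Lc x = \Lc(\Pi x)$, from which the quadratic-form equality follows via part~1. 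The paper's version is shorter because it can quote the quadratic-form formula; yours is self-contained (it does not need Lemma~\ref{lemma:ray_disc}) and delivers a sharper conclusion, which is in fact useful elsewhere (e.g.\ when analyzing $d\,\|\Pi X_t\|_2^2$ in the stochastic section).
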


\begin{proof}
For the first statement, observe that since the diffusion process
is defined on a closed system, the total measure given by $\sum_{u \in V} \vp_u$ does not change.
Therefore, $0 = \langle \vec{1}, \frac{d \vp}{d t} \rangle = \langle \Wh \vec{1}, \frac{d x}{d t} \rangle$,
which implies that $\Lc x = - \frac{d x}{d t} \perp x_1$.

For the second statement,
observe that from Lemma~\ref{lemma:ray_disc},
we have:

$\langle x, \Lc x \rangle = \sum_{e \in E} w_e
\max_{u,v \in e} (\frac{x_u}{\sqrt{w_u}} - \frac{x_v}{\sqrt{x_v}})^2 
= \langle (x + \alpha x_1), \Lc (x + \alpha x_1) \rangle $,
where the last equality holds for all real numbers $\alpha$.
It suffices to observe that $\Pi x = x + \alpha x_1$, for some suitable real $\alpha$.
\end{proof}

\begin{proofof}{Theorem~\ref{th:hyper_lap}}
Suppose $\Lc$ is the normalized Laplacian
induced by the diffusion process in Lemma~\ref{lemma:define_lap}.
Let $\gamma_2 := \min_{\vec{0} \neq x \perp \Wh \vec{1}} \rayc(x)$
be attained by some minimizer $x_2$.
We use the isomorphism between the three spaces:
$\Wmh \vp = x = \Wh f$.

The third statement of Lemma~\ref{lemma:deriv}
can be formulated in terms of the normalized space,
which states that $\frac{d \rayc(x)}{d t} \leq 0$,
where equality holds \emph{iff} $\Lc x \in \spn(x)$.

We claim that $\frac{d \rayc(x_2)}{d t} = 0$.
Otherwise, suppose $\frac{d \rayc(x_2)}{d t} < 0$.
From Lemma~\ref{lemma:lap_proj},
we have $\frac{dx}{dt} = - \Lc x \perp \Wh \vec{1}$.
Hence, it follows that at this moment, the current normalized
vector is at position $x_2$, and is moving
towards the direction given by
$x' := \frac{d x}{dt}|_{x=x_2}$ such that
$x' \perp \Wh \vec{1}$, and $\frac{d \rayc(x)}{d t}|_{x=x_2} < 0$.
Therefore, for sufficiently small $\eps > 0$,
it follows that $x_2' := x_2 + \eps x'$ is a non-zero vector
that is perpendicular to $\Wh \vec{1}$
and $\rayc(x_2') < \rayc(x_2) = \gamma_2$, contradicting the definition of $x_2$.

Hence, it follows that $\frac{d \rayc(x_2)}{d t} = 0$,
which implies that $\Lc x_2 \in \spn(x_2)$.
Since $\gamma_2 = \rayc(x_2) = \frac{\langle x_2, \Lc x_2 \rangle}{\langle x_2,  x_2 \rangle}$,
it follows that $\Lc x_2 = \gamma_2 x_2$, as required.
\end{proofof}

\section{Stochastic Diffusion Process}
\label{sec:stochastic}

In Section~\ref{sec:laplacian},
we define a diffusion process in a closed system with respect
to a hypergraph according to the
equation $\frac{d \vp}{d t} = - \Lo \vp$,
where $\vp \in \R^V$ is the measure vector,
and $\Lo$ is the corresponding operator on the measure space.
In this section, we consider the stochastic diffusion process
in which on the top of the diffusion process,
each node is subject to independent Brownian noise.
We analyze the process using \Ito calculus,
and the reader can refer to the textbook by
{\O}ksendal~\cite{oksendalSDE} for relevant background.

\noindent \textbf{Randomness Model.}
We consider the standard multi-dimensional Wiener process
$\{B_t \in \R^V: t \geq 0\}$ with independent Brownian
motion on each coordinate.
Suppose the variance of the Brownian motion 
experienced by each node is proportional to its weight.
To be precise, there exists $\eta \geq 0$
such that for each node $u \in V$,
the Brownian noise introduced to $u$ till time $t$ is $\sqrt{\eta w_u} \cdot B_t(u)$,
whose variance is $\eta w_u t$.
It follows that the net amount of measure
added to the system till time $t$
is $\sum_{u \in V} \sqrt{\eta w_u} \cdot B_t(u)$,
which has normal distribution  $N(0, \eta t \cdot w(V))$.
Observe that the special case for $\eta = 0$ is just the
diffusion process in a closed system.

This random model induces an \Ito process on the measure space
given by the following stochastic differential equation:
$d \Phi_t = - \Lo \Phi_t \, dt + \sqrt{\eta} \cdot \Wh \, d B_t$,
 with some initial
measure $\Phi_0$

By the transformation into the normalized space
$x := \Wmh \vp$, we consider the corresponding
stochastic differential equation in the normalized space:
$d X_t = - \Lc X_t \, dt + \sqrt{\eta} \, d B_t$,
where $\Lc$ is the normalized Laplacian from Lemma~\ref{lemma:define_lap}.  Observe that the random noise
in the normalized space is spherically symmetric.

\noindent \textbf{Convergence Metric.}
Given a measure vector $\vp \in \R^V$,
denote $\vp^* := \frac{\langle \vec{1}, \vp \rangle}{w(V)} \cdot \W \vec{1}$, which is the measure vector obtained by distributing
the total measure $\sum_{u \in V} \vp_u = \langle \vec{1}, \vp \rangle$ among the nodes such that each node $u$ receives an amount
proportional to its weight $w_u$.

For the normalized vector $x = \Wmh \vp$,
observe that $x^* := \Wmh \vp^* = \frac{\langle \vec{1}, \vp \rangle}{w(V)} \cdot \Wh \vec{1}$
is the projection of $x$ into the 
subspace spanned by $x_1 := \Wh \vec{1}$.
We denote by $\Pi$ the orthogonal projection
operator into the subspace orthogonal to $x_1$.

Hence, to analyze how far the measure is from being stationary,
we consider the vector $\Phi_t - \Phi^*_t$,
whose $\ell_1$-norm is
$\|\Phi_t - \Phi^*_t\|_1 \leq \sqrt{w(V)} \cdot \| \Pi X_t \|_2$.
As random noise is constantly delivered to the system,
we cannot hope to argue that these random quantities approach zero as $t$ tends to infinity.  However,
we can show that these random variables are stochastically
dominated by distributions with bounded mean and variance
as $t$ tends to infinity.  The following lemma states that a larger value 
of $\gamma_2$ implies that the measure is closer to being stationary.

\begin{lemma}[Stochastic Dominance]
\label{lemma:stoch_dom}
Suppose $\gamma_2 = \min_{0 \neq x \perp x_1} \rayc(x)$.
Then, in the stochastic diffusion process described above,
for each $t \ge 0$,
the random variable $\|\Pi X_t \|_2$
is stochastically dominated by 
$\|\widehat{X}_t \|_2$,
where $\widehat{X}_t$
has distribution 
$e^{-\gamma_2 t} \Pi {X}_0 + \sqrt{\frac{\eta}{2 \gamma_2} \cdot(1 - e^{- 2 \gamma_2 t})} \cdot N(0,1)^V$,
 and $N(0,1)^V$ is the standard $n$-dimensional Guassian distribution with independent coordinates.
\end{lemma}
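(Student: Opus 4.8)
The plan is to compare the true stochastic process $X_t$ with a one-dimensional auxiliary process that dominates its deviation from equilibrium. The key structural fact I would exploit is the third statement of Lemma~\ref{lemma:deriv}, which tells us that the Rayleigh quotient can only decrease along the deterministic flow, together with the definition of $\gamma_2$ as the minimum Rayleigh quotient over vectors orthogonal to $x_1$. Since $\Lc x \perp x_1$ for all $x$ by Lemma~\ref{lemma:lap_proj}, the deterministic drift $-\Lc X_t$ preserves the component along $x_1$ and acts entirely within the orthogonal complement. The idea is that in the subspace orthogonal to $x_1$, the drift contracts the norm at rate at least $\gamma_2$, because for any $x \perp x_1$ we have $\langle x, \Lc x \rangle = \rayc(x) \cdot \|x\|^2 \geq \gamma_2 \|x\|^2$.

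First I would project the normalized stochastic differential equation onto the subspace orthogonal to $x_1$. Applying $\Pi$ and using that $\Pi$ commutes with the noise term while $\Pi \Lc = \Lc$ (since $\Lc x \perp x_1$ always), I obtain a process for $\Pi X_t$ whose drift is $-\Lc (\Pi X_t)$ and whose noise is the spherically symmetric $\sqrt{\eta}\, \Pi\, dB_t$. Next I would compute the \Ito differential of $\|\Pi X_t\|_2^2$. The drift contribution is $-2 \langle \Pi X_t, \Lc \Pi X_t \rangle \leq -2\gamma_2 \|\Pi X_t\|_2^2$ by the Rayleigh quotient bound, and the \Ito correction from the quadratic variation contributes a deterministic term proportional to $\eta$ times the dimension of the orthogonal subspace. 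This gives a differential inequality for $\|\Pi X_t\|_2^2$ that is dominated (in the drift) by the corresponding equation for a process whose squared norm satisfies the Ornstein--Uhlenbeck-type relation with contraction rate exactly $\gamma_2$.

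I would then define the comparison variable $\widehat{X}_t$ explicitly as the solution of the linear SDE $d\widehat{X}_t = -\gamma_2 \widehat{X}_t\,dt + \sqrt{\eta}\,dB_t$ in the orthogonal subspace with $\widehat{X}_0 = \Pi X_0$, which is an Ornstein--Uhlenbeck process whose time-$t$ marginal is exactly the Gaussian $e^{-\gamma_2 t}\Pi X_0 + \sqrt{\frac{\eta}{2\gamma_2}(1 - e^{-2\gamma_2 t})}\cdot N(0,1)^V$ stated in the lemma; this matches by the standard mean and variance formulas for a scalar OU process applied coordinatewise in an orthonormal basis of $x_1^\perp$. The remaining task is the stochastic dominance of $\|\Pi X_t\|_2$ by $\|\widehat{X}_t\|_2$, which I would establish by a pathwise or coupling comparison: drive both processes with the same Brownian motion $B_t$ and compare the radial dynamics, using that the drift on $\|\Pi X_t\|_2^2$ is everywhere at least as contracting as the drift $-2\gamma_2 \|\widehat X_t\|_2^2$ on the comparison process while the noise inputs are identical.

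The main obstacle I expect is making the comparison argument rigorous, because the drift $-\Lc X_t$ is nonlinear (the Laplacian $\Lc$ is not a linear operator) and the contraction rate $\gamma_2$ is only a lower bound on the instantaneous contraction, not an equality. A naive coupling of the full vector-valued processes need not yield a monotone relation between the norms, since the noise pushes the two processes into different directions where the nonlinear drift behaves differently. The cleanest route is probably to work directly with the radial process: apply \Ito's formula to $R_t := \|\Pi X_t\|_2$ (or its square, to avoid the singularity at the origin) and show that its drift is pointwise bounded above by the drift of the radial process associated to $\widehat X_t$, so that a one-dimensional comparison theorem for SDEs yields the stochastic dominance. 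Verifying that the \Ito correction terms coincide or go the right way, and handling the possible nondifferentiability of $\Lc$ at points where equivalence classes merge, is the delicate part that the full proof must address carefully.
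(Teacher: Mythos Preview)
Your approach is essentially the paper's: apply \Ito's formula to $Y_t := \|\Pi X_t\|_2^2$, use $\langle \Pi X_t, \Lc \Pi X_t\rangle \ge \gamma_2\,Y_t$ to bound the drift, and compare with the squared norm $\widehat Y_t$ of an Ornstein--Uhlenbeck process $\widehat X_t$ with contraction rate exactly~$\gamma_2$, whose marginal is the Gaussian in the statement.

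Two points where the paper differs from your sketch are worth noting. First, the paper does \emph{not} drive $\widehat X_t$ with the same Brownian motion $B_t$. Your own worry is exactly right: with a common $B_t$ the martingale parts of $dY_t$ and $d\widehat Y_t$ are $2\sqrt{\eta}\,\langle \Pi X_t, dB_t\rangle$ and $2\sqrt{\eta}\,\langle \widehat X_t, dB_t\rangle$, which need not agree even when $Y_t=\widehat Y_t$, so a pathwise comparison fails. The paper instead introduces an independent Brownian motion $\widehat B_t$ and constructs the coupling at the level of the \emph{radial} noise: when $Y_t = \widehat Y_t$, both $\langle \Pi X_t, dB_t\rangle$ and $\langle \widehat X_t, d\widehat B_t\rangle$ have the same law (each is $\sqrt{Y_t}$ times a one-dimensional Brownian increment, by spherical symmetry of $dB_t$), so one couples them to be equal; when $Y_t < \widehat Y_t$ the increments are taken independently. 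This is precisely the one-dimensional comparison you gesture at in your final paragraph, made concrete. Second, the paper lets $\widehat X_t$ live in the full $\R^V$ rather than in $x_1^\perp$, so the \Ito correction for $\widehat Y_t$ is $\eta n\,dt$, which dominates the correction $\eta(n-1)\,dt$ for $Y_t$; this matches the $N(0,1)^V$ in the lemma statement. Your version in $x_1^\perp$ would be tighter but does not match the stated distribution.
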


\begin{proof}
Consider the function $h : \R^V \ra \R$ given by
$h(x) := \| \Pi x \|_2^2 = \| x -  x^*\|_2^2$,
where $x^* := \frac{\langle x_1, x \rangle}{w(V)} \cdot x_1$ and $x_1 := \Wh \vec{1}$.
Then, one can check that
the gradient is $\nabla h(x) = 2 \, \Pi x$,
and the Hessian is 
$\nabla^2 h(x) = 2 (\I - \frac{1}{w(V)} \cdot \Wh J \Wh)$,
where $J$ is the matrix where every entry is 1.

Define the \Ito process $Y_t := h(X_t) = \langle \Pi X_t,  \Pi X_t \rangle$.
By the \Ito lemma,
we have 

$d Y_t = \langle \nabla h(X_t), d X_t \rangle
+ \frac{1}{2} (d X_t)^\T \nabla^2 h(X_t) \, (d X_t)$.

To simplify the above expression, we make the substitution
$d X_t = - \Lc X_t \, dt + \sqrt{\eta} \, d B_t$.
From Lemma~\ref{lemma:lap_proj},
we have for all $x$, $\Lc x \perp x_1$ and
$\langle x , \Lc x \rangle = \langle \Pi x, \Lc \Pi x \rangle$.

Moreover, the convention for the product of differentials is
$0 = dt \cdot dt = dt \cdot d B_t(u) 
=  d B_t(u) \cdot d B_t(v)$ for $u \neq v$,
and $d B_t(u)\cdot d B_t(u) = dt$.
Hence, only the diagonal entries of the Hessian are relevant.

We have
$d Y_t = - 2 \langle \Pi X_t, \Lc \Pi X_t \rangle \, dt
+ \eta \sum_{u \in V} (1 - \frac{w_u}{w(V)}) \, dt
+ 2 \sqrt{\eta} \cdot \langle \Pi X_t, dB_t \rangle$.
Observing that $\Pi X_t \perp x_1$, from the definition of
$\gamma_2$, we have 
$\langle \Pi X_t, \Lc \Pi X_t \rangle \geq 
\gamma_2 \cdot \langle \Pi X_t, \Pi X_t \rangle$.
Hence, we have the following inequality:
$d Y_t \leq - 2 \gamma_2 Y_t \, dt
+ \eta n \, dt
+ 2 \sqrt{\eta} \cdot \langle \Pi X_t, dB_t \rangle$.

We next define another \Ito process $\widehat{Y_t} := \langle \widehat{X}_t, \widehat{X}_t \rangle$ with
initial value $\widehat{X}_0 := \Pi X_0$ and
stochastic differential equation:
$d \widehat{Y}_t = - 2 \gamma_2 \widehat{Y}_t \, dt
+ \eta n \, dt
+ 2 \sqrt{\eta} \cdot \langle \widehat{X}_t, d\widehat{B}_t \rangle$.

We briefly explain why $Y_t$ is stochastically dominated by
$\widehat{Y}_t$ by using a simple coupling argument.
If $Y_t < \widehat{Y_t}$, then we can choose $d B_t$ and $d\widehat{B}_t$ to be independent.
If $Y_t = \widehat{Y_t}$, observe that
$\langle \Pi X_t, dB_t \rangle$ and 
$\langle \widehat{X}_t, d\widehat{B}_t \rangle$ have the same distribution, because both $d B_t$ and $d\widehat{B}_t$
are spherically symmetric.  Hence,
in this case, we can choose a coupling between $d B_t$ and $d\widehat{B}_t$ such that
$\langle \Pi X_t, dB_t \rangle = \langle \widehat{X}_t, d\widehat{B}_t \rangle$.

Using the \Ito lemma, one can verify that the
above stochastic differential equation can be derived from
the following equation involving $\widehat{X}_t$:
$d \widehat{X}_t = - \gamma_2 \widehat{X}_t \, dt 
+ \sqrt{\eta} \, d \widehat{B}_t$.

Because $d \widehat{B}_t$ has independent coordinates,
it follows that the equation can be solved independently
for each node $u$.  Again, using
the \Ito lemma,
one can verify that $d(e^{\gamma_2 t} X_t) = \sqrt{\eta} \cdot e^{\gamma_2 t}
\, d \widehat{B}_t$.
Therefore, we have the solution
$\widehat{X}_t = e^{-\gamma_2 t} \widehat{X}_0 + \sqrt{\eta} \cdot e^{- \gamma_2 t} \int_{0}^t e^{\gamma_2 s} \, d \widehat{B}_s$,
which has the same distribution 
as: 

$e^{-\gamma_2 t} \widehat{X}_0 + \sqrt{\frac{\eta}{2 \gamma_2} \cdot(1 - e^{- 2 \gamma_2 t})} \cdot N(0,1)^V$, as required.
\end{proof}

\begin{corollary}[Convergence and Laplacian]
In the stochastic diffusion process, as $t$ tends to infinity,
$\|\Phi_t - \Phi_t^*\|_1^2$ is stochastically dominated by $\frac{\eta \cdot w(V)}{2 \gamma_2} \cdot \chi^2(n)$,
where $\chi^2(n)$ is the
chi-squared distribution with $n$ degrees of freedom.
Hence, $\lim_{t \ra \infty} E[ \|\Phi_t - \Phi_t^*\|_1]
\leq \sqrt{\frac{\eta n \cdot w(V)}{2 \gamma_2}}$.
\end{corollary}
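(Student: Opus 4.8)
The plan is to combine two ingredients that are already in hand: the deterministic comparison between the $\ell_1$-norm in the measure space and the $\ell_2$-norm in the normalized space, namely $\|\Phi_t - \Phi_t^*\|_1 \leq \sqrt{w(V)} \cdot \|\Pi X_t\|_2$ from the preliminaries, together with the stochastic dominance of $\|\Pi X_t\|_2$ by $\|\widehat{X}_t\|_2$ established in Lemma~\ref{lemma:stoch_dom}. First I would square the deterministic bound to obtain $\|\Phi_t - \Phi_t^*\|_1^2 \leq w(V) \cdot \|\Pi X_t\|_2^2$, which holds surely and hence in particular gives stochastic dominance of the left side by the right. Since the map $s \mapsto w(V) \cdot s^2$ is nondecreasing on $[0,\infty)$, and monotone transformations preserve stochastic dominance, the dominance $\|\Pi X_t\|_2 \preceq \|\widehat{X}_t\|_2$ upgrades to $w(V)\cdot\|\Pi X_t\|_2^2 \preceq w(V)\cdot\|\widehat{X}_t\|_2^2$. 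Chaining these yields, for every fixed $t$, that $\|\Phi_t - \Phi_t^*\|_1^2$ is stochastically dominated by $w(V)\cdot\|\widehat{X}_t\|_2^2$.

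Next I would identify the limiting law of the dominating variable. Recalling that $\widehat{X}_t$ has distribution $e^{-\gamma_2 t}\Pi X_0 + \sqrt{\frac{\eta}{2\gamma_2}(1 - e^{-2\gamma_2 t})} \cdot N(0,1)^V$, the deterministic first term vanishes and the Gaussian coefficient increases to $\sqrt{\frac{\eta}{2\gamma_2}}$ as $t \ra \infty$, so $\widehat{X}_t$ converges in distribution to $\sqrt{\frac{\eta}{2\gamma_2}} \cdot N(0,1)^V$. Because the squared $\ell_2$-norm of an $n$-dimensional standard Gaussian is by definition $\chi^2(n)$, the variable $w(V)\cdot\|\widehat{X}_t\|_2^2$ converges in distribution to $\frac{\eta\cdot w(V)}{2\gamma_2}\cdot\chi^2(n)$. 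Combined with the per-$t$ dominance above, this gives the first claim.

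For the expectation bound I would work with the unsquared variable: $E[\|\Phi_t - \Phi_t^*\|_1] \leq \sqrt{w(V)}\cdot E[\|\Pi X_t\|_2] \leq \sqrt{w(V)}\cdot E[\|\widehat{X}_t\|_2]$, the last step by stochastic dominance. Applying Jensen's inequality to the concave square root gives $E[\|\widehat{X}_t\|_2] \leq \sqrt{E[\|\widehat{X}_t\|_2^2]}$. Since $\Pi X_0$ is deterministic and the Gaussian part has zero mean, the cross term vanishes and the second moment splits as $E[\|\widehat{X}_t\|_2^2] = e^{-2\gamma_2 t}\|\Pi X_0\|_2^2 + \frac{\eta}{2\gamma_2}(1 - e^{-2\gamma_2 t})\cdot n$, which tends to $\frac{\eta n}{2\gamma_2}$ as $t \ra \infty$. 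Combining yields $\lim_{t\ra\infty} E[\|\Phi_t - \Phi_t^*\|_1] \leq \sqrt{\frac{\eta n\cdot w(V)}{2\gamma_2}}$.

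The distributional identifications and the Jensen step are routine; the point that deserves care is the precise meaning of the phrase \emph{as $t$ tends to infinity, $\|\Phi_t - \Phi_t^*\|_1^2$ is stochastically dominated by $\frac{\eta\, w(V)}{2\gamma_2}\chi^2(n)$}. The clean reading is the per-$t$ dominance by $w(V)\|\widehat{X}_t\|_2^2$ together with the convergence in distribution of this dominating variable to the stated $\chi^2$ law; I would phrase it this way rather than claim that $\|\Phi_t - \Phi_t^*\|_1^2$ itself converges, which need not hold since the system keeps receiving noise. A secondary point to check is the passage to the limit in the expectation inequality, which is justified because $E[\|\widehat{X}_t\|_2^2]$ is monotone and bounded, so $\sqrt{E[\|\widehat{X}_t\|_2^2]}$ converges to $\sqrt{\eta n/(2\gamma_2)}$ while dominating $E[\|\Phi_t - \Phi_t^*\|_1]$ for every $t$.
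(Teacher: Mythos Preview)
Your proposal is correct and follows essentially the same route as the paper's proof: combine the deterministic inequality $\|\Phi_t - \Phi_t^*\|_1^2 \le w(V)\,\|\Pi X_t\|_2^2$ with the stochastic dominance from Lemma~\ref{lemma:stoch_dom}, then identify the limiting law of $\|\widehat{X}_t\|_2^2$ as $\frac{\eta}{2\gamma_2}\chi^2(n)$ and use Jensen for the expectation bound. One small quibble: $E[\|\widehat{X}_t\|_2^2] = e^{-2\gamma_2 t}\|\Pi X_0\|_2^2 + \frac{\eta n}{2\gamma_2}(1 - e^{-2\gamma_2 t})$ is not monotone in general (its derivative has sign $\sgn(\eta n - 2\gamma_2\|\Pi X_0\|_2^2)$), but this is irrelevant since you only need that the bound converges as $t\to\infty$, which it clearly does.
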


\noindent \textbf{Remark.}
Observe that the total measure introduced
into the system is $\sum_{u \in V} \sqrt{\eta w_u} \cdot B_t(u)$,
which has standard deviation $\sqrt{\eta t \cdot w(V)}$.
Hence, as $t$ increases, the ``error rate''
is at most $\sqrt{\frac{n}{2 \gamma_2 t}}$.

\begin{proof}
Observe that, as $t$ tends to infinity,
$\widehat{Y_t} = \| \widehat{X}_t \|_2^2$
converges to the distribution $\frac{\eta}{2 \gamma_2} \cdot \chi^2(n)$, where $\chi^2(n)$ is the
chi-squared distribution with $n$ degrees of freedom (having
mean $n$ and standard deviation $\sqrt{2n}$).

Finally,
observing that $\|\Phi_t - \Phi_t^*\|_1^2 \leq w(V) \cdot \|\Pi X_t\|_2^2$,
it follows that as $t$ tends to infinity,
$\|\Phi_t - \Phi_t^*\|_1^2$ is stochastically dominated by
the distribution $\frac{\eta \cdot w(V)}{2 \gamma_2} \cdot \chi^2(n)$,
which has mean $\frac{\eta n \cdot w(V)}{2 \gamma_2}$
and standard deviation $\frac{\eta \sqrt{n} \cdot w(V)}{\sqrt{2} \gamma_2}$.
\end{proof}

\begin{corollary}[Upper Bound for Mixing Time for $\eta = 0$]
\label{cor:mixing}
Consider the deterministic diffusion process with $\eta = 0$,
and some initial probability measure $\vp_0 \in \R_+^V$
such that $\langle \vec{1}, \vp_0 \rangle = 1$.
Denote $\vp^* := \frac{1}{w(V)} \cdot \W \vec{1}$,
and $\vp^*_{\min} := \min_{u \in V} \vp^*(u)$.
Then, for any $\delta > 0$ and $t \geq \frac{1}{\gamma_2} \log \frac{1}{\delta \sqrt{\vp^*_{\min}}}$, we have
$\|\Phi_t - \vp^*\|_1 \leq \delta$.
\end{corollary}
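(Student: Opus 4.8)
The plan is to track the squared distance to equilibrium in the normalized space, show it decays like $e^{-2\gamma_2 t}$, and then convert back to the $\ell_1$-norm on the measure space. Set $X_t := \Wmh \Phi_t$ and $g(t) := \|\Pi X_t\|_2^2$, where $\Pi$ is the orthogonal projection onto the subspace orthogonal to $x_1 := \Wh \vec{1}$. Since the system is closed (and $\eta = 0$), the total measure $\langle \vec{1}, \Phi_t \rangle = \langle x_1, X_t \rangle = 1$ is invariant, so the component of $X_t$ along $x_1$ never moves; consequently $\Phi_t^* = \vp^*$ for every $t$, and hence $\|\Phi_t - \vp^*\|_1 = \|\Phi_t - \Phi_t^*\|_1$. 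Confirming this time-independence of $\Phi_t^*$ is a small but necessary preliminary.

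First I would establish the differential inequality $\frac{dg}{dt} \leq -2 \gamma_2\, g$. This is exactly the $\eta = 0$ specialization of the computation inside the proof of Lemma~\ref{lemma:stoch_dom}: the Brownian term disappears and the drift leaves $dY_t \leq -2\gamma_2 Y_t\,dt$. Alternatively I would redo it directly: differentiating gives $\frac{dg}{dt} = -2\langle \Pi X_t, \Pi \Lc X_t\rangle$ using $\frac{dX_t}{dt} = -\Lc X_t$; by Lemma~\ref{lemma:lap_proj} we have $\Lc X_t \perp x_1$, so $\Pi \Lc X_t = \Lc X_t$ and the $x_1$-component of $X_t$ drops out, yielding $\langle \Pi X_t, \Lc X_t\rangle = \langle X_t, \Lc X_t\rangle = \langle \Pi X_t, \Lc \Pi X_t\rangle$; finally, since $\Pi X_t \perp x_1$, the definition of $\gamma_2$ gives $\langle \Pi X_t, \Lc \Pi X_t\rangle \geq \gamma_2 \|\Pi X_t\|_2^2 = \gamma_2\, g$. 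Integrating this inequality by a standard comparison argument gives $g(t) \leq e^{-2\gamma_2 t} g(0)$, i.e.\ $\|\Pi X_t\|_2 \leq e^{-\gamma_2 t}\|\Pi X_0\|_2$ (the degenerate case $\Pi X_t = 0$, already at equilibrium, is handled trivially).

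Next I would convert to the measure space using the Cauchy--Schwarz bound from the Preliminaries, $\|\Phi_t - \Phi_t^*\|_1 \leq \sqrt{w(V)}\cdot \|\Pi X_t\|_2$, and then bound the initial term by $\|\Pi X_0\|_2 \leq \|X_0\|_2$. Here the probability-measure hypothesis $\vp_0 \in \R_+^V$ with $\langle \vec{1}, \vp_0\rangle = 1$ is essential: writing $\|X_0\|_2^2 = \sum_{u} \vp_0(u)^2 / w_u$, using $w_u \geq w_{\min} = \vp^*_{\min}\cdot w(V)$, and using $\sum_u \vp_0(u)^2 \leq \left(\sum_u \vp_0(u)\right)^2 = 1$ (valid precisely because all $\vp_0(u) \geq 0$), I obtain $\|X_0\|_2^2 \leq \frac{1}{\vp^*_{\min}\cdot w(V)}$. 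Combining everything yields $\|\Phi_t - \vp^*\|_1 \leq \sqrt{w(V)}\cdot e^{-\gamma_2 t}\cdot \frac{1}{\sqrt{\vp^*_{\min}\cdot w(V)}} = \frac{e^{-\gamma_2 t}}{\sqrt{\vp^*_{\min}}}$; requiring this to be at most $\delta$ and solving for $t$ gives exactly the claimed threshold $t \geq \frac{1}{\gamma_2}\log\frac{1}{\delta\sqrt{\vp^*_{\min}}}$.

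The ODE comparison and the norm conversion are routine, and the exponential decay is inherited essentially for free from the $\eta = 0$ case of Lemma~\ref{lemma:stoch_dom}. The one step demanding genuine care is the initial bound $\|\Pi X_0\|_2 \leq 1/\sqrt{\vp^*_{\min}\cdot w(V)}$, which is the only place the probability-measure assumption enters; without non-negativity and unit mass the normalized initial vector could be arbitrarily large and no finite mixing time would follow, so I expect that to be where the argument must be handled most carefully.
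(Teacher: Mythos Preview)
Your proposal is correct and follows essentially the same route as the paper: specialize Lemma~\ref{lemma:stoch_dom} to $\eta=0$ to get $\|\Pi X_t\|_2 \leq e^{-\gamma_2 t}\|\Pi X_0\|_2$, pass to the $\ell_1$-norm via the $\sqrt{w(V)}$ factor, and bound $\|\Pi X_0\|_2^2 \leq \|X_0\|_2^2 = \sum_u \vp_0(u)^2/w_u \leq 1/\min_u w_u = 1/(\vp^*_{\min}\,w(V))$. Your exposition is slightly more explicit than the paper's (you spell out that $\Phi_t^* \equiv \vp^*$ and that $\sum_u \vp_0(u)^2 \leq 1$ uses non-negativity), but the argument is the same.
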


\noindent \textbf{Remark.}
By considering edge expansion,
it is proved in~\cite{Louis15stoc} that
in their version of the diffusion process,
there exists some initial distribution $\vp_0$
such that for some $t \leq O(\frac{1}{\gamma_2} \log \frac{1}{\delta})$,
$\|\Phi_t - \vp^*\|_1 > \delta$.

\begin{proof}
In the deterministic process with $\eta = 0$, stochastic
dominance becomes
$\|\Pi X_t \|_2 \leq e^{\gamma_2 t} \cdot \|\Pi X_0\|_2$.

Relating the norms, we have
$\|\Phi_t - \vp^*\|_1 \leq \sqrt{w(V)} \cdot \|\Pi X_t\|_2
\leq \sqrt{w(V)} \cdot e^{- \gamma_2 t} \cdot \|\Pi X_0\|_2$.

Observe that
$\|\Pi X_0\|_2^2
\leq \langle X_0, X_0 \rangle
= \langle \vp_0, \Wm \vp_0 \rangle \leq \frac{1}{\min_u w_u}$.

Hence, it follows that
$\|\Phi_t - \vp^*\|_1 \leq \frac{1}{\sqrt{\vp^*_{\min}}} \cdot e^{- \gamma_2 t}$, which is at most $\delta$,
for $t \geq \frac{1}{\gamma_2} \log \frac{1}{\delta \sqrt{\vp^*_{\min}}}$.
\end{proof}

{
\bibliography{ref,cheeger,related_work}
\bibliographystyle{alpha}
}

\appendix

\section{Comparing Discrepancy Minimizers}
\label{sec:min}

We prove Lemma~\ref{lemma:min} by the following claims.

\begin{claim}
\label{claim:xg}
For $k \geq 1$, $\xi_k \leq \gamma_k$.
\end{claim}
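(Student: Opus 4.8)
The plan is to exhibit the procedural minimizers $x_1, \dots, x_k$ themselves as an admissible family in the minimization defining $\xi_k$, and then to bound the resulting objective value by $\gamma_k$. The whole argument rests on two observations: that the procedural construction already produces a mutually orthogonal family, and that the sequence $\{\gamma_i\}$ is non-decreasing.

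First I would note that, by construction, the procedural minimizers form $k$ non-zero, mutually orthogonal vectors in the normalized space: indeed each $x_i$ is chosen with $x_i \perp \{x_j : j \in [i-1]\}$, so any two of them are orthogonal. Hence $(x_1, \dots, x_k)$ is one feasible choice in the minimum defining $\xi_k = \min_{x_1, \dots, x_k} \max_{i \in [k]} \Dc(x_i)$, which immediately gives $\xi_k \leq \max_{i \in [k]} \Dc(x_i) = \max_{i \in [k]} \gamma_i$, using that $\Dc(x_i) = \gamma_i$ for each procedural minimizer.

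Next I would show that $\{\gamma_i\}$ is non-decreasing. The feasible region for $\gamma_k$, namely $\{x \neq \vec{0} : x \perp \{x_i : i \in [k-1]\}\}$, is contained in the feasible region for $\gamma_{k-1}$, since it carries the one extra orthogonality constraint of being perpendicular to $x_{k-1}$. Minimizing the same objective $\Dc$ over a smaller set can only increase the minimum, so $\gamma_{k-1} \leq \gamma_k$. Consequently $\max_{i \in [k]} \gamma_i = \gamma_k$, and combining this with the previous bound yields $\xi_k \leq \gamma_k$, as claimed.

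I do not expect a genuine obstacle here, as this is essentially a matching-feasibility argument. The only point requiring (minor) care is the monotonicity of $\{\gamma_i\}$, which is exactly what lets me replace $\max_{i \in [k]} \gamma_i$ by $\gamma_k$; everything else follows directly from the definitions, since the procedural minimizers are by design an orthogonal family of the type over which $\xi_k$ minimizes.
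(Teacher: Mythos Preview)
Your proof is correct and essentially matches the paper's argument: both use the procedural minimizers $x_1,\dots,x_k$ as a feasible orthogonal family for $\xi_k$ and then argue that $\max_{i\in[k]}\gamma_i=\gamma_k$. The only cosmetic difference is that the paper phrases the latter as ``$x_k$ is a candidate in the minimum defining each $\gamma_i$ since $x_k\perp x_j$ for all $j<k$,'' whereas you phrase it as monotonicity via nested feasible regions; these are the same observation.
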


\begin{proof}
Suppose the procedure produces $\{\gamma_i: i \in [k]\}$,
which is attained by orthonormal vectors $X_k := \{x_i : i \in [k]\}$.
Observe that $\max_{i \in [k]} \Dc(x_i) = \Dc(x_k) = \gamma_k$,
since $x_k$ could have been a candidate in the minimum for defining $\gamma_i$ because $x_k \perp x_j$, for all $j \in [k-1]$.

Since $X_k$ is a candidate for taking the minimum
over sets of $k$ orthonormal vectors in the definition of
$\xi_k$, it follows that $\xi_k \leq \gamma_k$.
\end{proof}

\begin{claim}
\label{claim:gz1}
For $k \geq 1$, $\gamma_k \leq \zeta_k$.
\end{claim}

\begin{proof}
For $k=1$, $\gamma_1 = \zeta_1 = 0$.

For $k > 1$, suppose the $\{\gamma_i : i \in [k-1]\}$
have already been constructed with
the corresponding orthonormal minimizers $X_{k-1} := \{x_i : i \in [k-1]\}$.

Let $Y_k := \{y_i: i \in [k]\}$ be an arbitrary set of $k$ orthonormal
vectors.  Since the subspace orthogonal to $X_{k-1}$
has rank $n - k + 1$ and the span of $Y_k$
has rank $k$, there must be a non-zero $y \in \spn(Y_k) \cap X_{k-1}^\perp$.

Hence, it follows that $\gamma_k = \min_{\vec{0} \neq x \in X_{k-1}^\perp}
\Dc(x) \leq \max_{y \in \spn(Y_k)} \Dc(y)$.
Since this holds for any set $Y_k$ of $k$ orthonormal vectors,
the result follows.
\end{proof}

\begin{claim}
\label{claim:zx}
For $k \geq 1$, $\zeta_k \leq k \xi_k$.
\end{claim}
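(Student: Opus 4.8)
The plan is to exhibit, for the bound $\zeta_k \le k\xi_k$, a single set of $k$ orthonormal vectors whose entire span has discrepancy ratio bounded by $k\xi_k$; the natural candidate is the set $\{x_1,\dots,x_k\}$ that already attains $\xi_k$. The crucial observation I would make first is that the numerator of the discrepancy ratio is the square of a seminorm. Concretely, writing $f = \Wmh x$, define $\|x\|_Q := \sqrt{\sum_{e \in E} w_e \max_{u,v \in e} (f_u - f_v)^2}$, so that $\Dc(x) = \|x\|_Q^2 / \|x\|_2^2$ (the denominator is $\sum_u w_u f_u^2 = \langle f,f\rangle_w = \|x\|_2^2$ by the isomorphism between the weighted and normalized spaces). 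For each fixed edge $e$, the quantity $\Delta_e(f) = \max_{u,v\in e}(f_u - f_v) = \max_{u,v\in e}|f_u - f_v|$ is a seminorm in $f$, hence in $x$ since $x \mapsto f$ is linear, and an $\ell_2$-aggregation $\sqrt{\sum_e w_e \Delta_e(\cdot)^2}$ of seminorms is again a seminorm by Minkowski's inequality. Thus $\|\cdot\|_Q$ satisfies homogeneity $\|\alpha x\|_Q = |\alpha|\,\|x\|_Q$ and the triangle inequality $\|x + y\|_Q \le \|x\|_Q + \|y\|_Q$.

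Granting this, the rest is a short computation. Let $\{x_1,\dots,x_k\}$ be orthonormal vectors attaining $\xi_k$, so that $\|x_i\|_Q^2 = \Dc(x_i) \le \xi_k$ for every $i$. Any vector in their span has the form $x = \sum_{i\in[k]} \alpha_i x_i$, and by the triangle inequality together with homogeneity, $\|x\|_Q \le \sum_i |\alpha_i|\,\|x_i\|_Q \le \sqrt{\xi_k}\sum_i |\alpha_i|$. Applying Cauchy--Schwarz and orthonormality gives $\sum_i |\alpha_i| \le \sqrt{k}\,(\sum_i \alpha_i^2)^{1/2} = \sqrt{k}\,\|x\|_2$. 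Combining these, $\|x\|_Q \le \sqrt{k\xi_k}\,\|x\|_2$, and therefore $\Dc(x) = \|x\|_Q^2/\|x\|_2^2 \le k\xi_k$ for every nonzero $x \in \spn\{x_1,\dots,x_k\}$. Since $\zeta_k$ is the minimum over all sets of $k$ orthonormal vectors of the maximum discrepancy over their span, and $\{x_1,\dots,x_k\}$ is one such set, we conclude $\zeta_k \le k\xi_k$.

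The main obstacle I anticipate is justifying the seminorm property cleanly, in particular the triangle inequality for $\|\cdot\|_Q$; everything downstream (the $\sqrt{k}$ loss from Cauchy--Schwarz, the orthonormality bookkeeping) is routine, but the seminorm claim is exactly what converts ``each basis vector has small discrepancy'' into ``every combination has controlled discrepancy.'' I would verify it by checking the two elementary facts separately: that the per-edge diameter $\Delta_e$ is a seminorm, where nonnegativity and homogeneity are immediate and the triangle inequality follows from $\max_{u,v}|(f+g)_u - (f+g)_v| \le \max_{u,v}|f_u-f_v| + \max_{u,v}|g_u - g_v|$; and that the weighted $\ell_2$-combination preserves the seminorm property via Minkowski's inequality applied to the vectors $(\Delta_e(f))_{e}$ and $(\Delta_e(g))_{e}$ in the weighted $\ell_2$ space. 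I would note finally that this is the only one of the three comparisons whose constant is not $1$, and the $\sqrt{k}$ appearing inside Cauchy--Schwarz is precisely what produces the factor $k$ in the statement.
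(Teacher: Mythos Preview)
Your proof is correct and follows essentially the same approach as the paper: both pick the vectors attaining $\xi_k$, show that every vector in their span has discrepancy ratio at most $k\xi_k$, and use Cauchy--Schwarz together with orthonormality to produce the factor $k$. The only difference is packaging: the paper applies Cauchy--Schwarz pointwise to $(h(u)-h(v))^2 = (\sum_i \alpha_i (f_i(u)-f_i(v)))^2 \le k\sum_i \alpha_i^2 (f_i(u)-f_i(v))^2$ before taking the edge maximum and summing, whereas you first isolate the seminorm structure of the numerator and then apply Cauchy--Schwarz to the coefficient sum $\sum_i |\alpha_i|$---the two routes are equivalent and yield the same bound.
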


\begin{proof}
Here it will be convenient to consider
the equivalent discrepancy ratios for the
weighted space.

Suppose $\xi_k$ is attained
by the orthonormal vectors $F_k := \{f_i : i \in [k]\}$ in
the weighted space, i.e., $\xi_k = \max_{i \in [k]} \D_w(f_i)$.
Then, it suffices to show that
for any $h \in \spn(F_k)$, $\D_w(h) \leq k \max_{i \in [k]} \D_w(f_i)$.

Suppose for some scalars $\alpha_i$'s, 
$
h=\sum_{i \in [k]} \alpha_i f_i.
$

For $u,v \in V$ we have
\begin{align*}
(h(u)-h(v))^2 
&= (\sum_{i\in[k]} \alpha_i(f_i(u)-f_i(v)))^2 \\
&\leq k\sum_{i\in [k]} \alpha_i^2 (f_i(u)-f_i(v))^2, \label{eq:1}
\end{align*}

where the last inequality follows from Cauchy-Schwarz inequality.
For each $e\in E$ we have 
\begin{align*}
\max_{u,v\in e} (h(u)-h(v))^2 
& \leq \max_{u,v\in e} k \sum_{i\in [k]} \alpha_i^2 (f_i(u)-f_i(v))^2 \\
& \leq k \sum_{i\in [k]} \alpha_i^2 \max_{u,v\in e} (f_i(u)-f_i(v))^2.
\end{align*} 

Therefore, we have
\begin{align*}
\D_w(h) &=  \frac{\sum_{e} \w{e} \max_{u,v\in e}{(h(u)-h(v))^2}}{\sum_{u \in V} \w{u} h(u)^2}\\
&\leq  \frac{k\sum_{i\in[k]} \alpha_i^2 \sum_{e} \w{e} \max_{u,v\in e}{(f_i(u)-f_i(v))^2}}{\sum_{i\in[k]} \alpha^2_i \sum_{u \in V} \w{u} f_i(u)^2} \\
&\leq k \max_{i \in [k]} \D_w(f_i),
\end{align*}
as required.
\end{proof}

\begin{claim}
\label{claim:gz2}
We have $\gamma_2 = \zeta_2$.
\end{claim}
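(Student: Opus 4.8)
The plan is to combine the inequality already established in Claim~\ref{claim:gz1}, namely $\gamma_2 \leq \zeta_2$, with the reverse inequality $\zeta_2 \leq \gamma_2$, which is what remains. To obtain $\zeta_2 \leq \gamma_2$, I would not optimize over all orthogonal pairs; instead I would exhibit a single explicit candidate pair and bound the maximum discrepancy ratio over its span. The natural candidate is $\{x_1, x_2\}$, where $x_1 := \Wh \vec{1}$ and $x_2$ is any minimizer attaining $\gamma_2 = \min_{\vec{0} \neq x \perp x_1} \Dc(x)$. These two vectors are non-zero and mutually orthogonal (orthogonality of $x_2$ to $x_1$ is built into the definition of $\gamma_2$), so they are admissible in the minimization defining $\zeta_2$, and it suffices to show $\Dc(z) \leq \gamma_2$ for every non-zero $z \in \spn\{x_1, x_2\}$.

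The key step is a computation in the weighted space. Writing $z = \alpha x_1 + \beta x_2$, its weighted representative is $\Wmh z = \alpha \vec{1} + \beta f_2$, where $f_2 := \Wmh x_2$, because $\Wmh x_1 = \vec{1}$. The crucial observation is that the numerator of the discrepancy ratio is translation-invariant: each term $\max_{u,v \in e}\big((\Wmh z)_u - (\Wmh z)_v\big)^2$ is unaffected by the additive $\alpha \vec{1}$ contribution, since that cancels in every difference, so it equals $\beta^2 \max_{u,v \in e}\big((f_2)_u - (f_2)_v\big)^2$. Hence the whole numerator of $\Dc(z)$ equals $\beta^2$ times the numerator of $\Dc(x_2)$, which in turn equals $\beta^2 \gamma_2 \|x_2\|^2$ (recalling that the denominator $\sum_u w_u (f_2)_u^2 = \|x_2\|^2$ gives $\Dc(x_2) = \gamma_2$). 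For the denominator of $\Dc(z)$, orthogonality of $x_1$ and $x_2$ yields $\|z\|^2 = \alpha^2 \|x_1\|^2 + \beta^2 \|x_2\|^2$.

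Putting these together gives $\Dc(z) = \frac{\beta^2 \gamma_2 \|x_2\|^2}{\alpha^2 \|x_1\|^2 + \beta^2 \|x_2\|^2} \leq \gamma_2$, since the term $\alpha^2 \|x_1\|^2 \geq 0$ only enlarges the denominator; the degenerate case $\beta = 0$ gives $z \in \spn(x_1)$ with $\Dc(z) = 0 \leq \gamma_2$. Therefore $\max\{\Dc(z) : z \in \spn\{x_1, x_2\}\} \leq \gamma_2$, so the admissible pair $\{x_1, x_2\}$ witnesses $\zeta_2 \leq \gamma_2$. Combined with Claim~\ref{claim:gz1}, this yields $\gamma_2 = \zeta_2$.

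I do not expect a serious obstacle here, as the argument is essentially a two-line Rayleigh-quotient estimate; the only point requiring care is recognizing \emph{why} this works specifically for $k = 2$. The argument hinges on $x_1$ lying in the span and being the unique ``free'' direction of zero discrepancy, so that the numerator is governed entirely by the single off-direction $x_2$ while the extra mass $\alpha^2\|x_1\|^2$ in the denominator can only help. For $k \geq 3$ the span of $\{x_1,\dots,x_k\}$ contains genuinely different non-trivial combinations of the $x_i$'s whose discrepancy cannot be controlled by any single $\gamma_i$, which is exactly why one only gets the weaker $\zeta_k \leq k\xi_k$ in Claim~\ref{claim:zx} rather than equality. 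The main subtlety to articulate cleanly is thus the translation-invariance of the discrepancy numerator together with its interaction with the orthogonal decomposition of the denominator.
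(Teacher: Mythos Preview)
Your proposal is correct and takes essentially the same approach as the paper: both use Claim~\ref{claim:gz1} for $\gamma_2 \leq \zeta_2$, then witness $\zeta_2 \leq \gamma_2$ via the candidate pair $\{x_1, x_2\}$ (equivalently $\{\mathbf{1}, f_2\}$ in the weighted space), exploiting translation-invariance of the numerator and orthogonality in the denominator to show every $z$ in the span has $\Dc(z) \leq \gamma_2$. The only cosmetic difference is that the paper carries out the computation in the weighted space rather than the normalized space.
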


\begin{proof}
From Claim~\ref{claim:gz1}, we already have $\gamma_2 \leq \zeta_2$.  Hence, it suffices to show the other direction.
We shall consider the discrepancy ratio for the weighted space.

Suppose $f \perp_w \mathbf{1}$ attains
$\D_w(f)=\gamma_2$. Then, we have
\begin{align*}
\zeta_2 &\leq \max_{g=af+b\mathbf{1}} \frac{\sum_{e \in E} \w{e} \max_{u,v\in e}{(\g{u}-\g{v})^2}}{\sum_{v \in V} \w{v} \g{v}^2} \\
&= \max_{g=af+b\mathbf{1}} \frac{\sum_{e \in E} \w{e} \max_{u,v\in e}{a^2(\f{u}-\f{v})^2}}{\sum_{v \in V} \w{v} (a \f{v}+b)^2}\\
&= \max_{g=af+b\mathbf{1}} \frac{\sum_{e \in E} \w{e} \max_{u,v\in e}{a^2(\f{u}-\f{v})^2}}{\sum_{v \in V} \w{v}(a^2 \f{v}^2+b^2)+2ab\sum_{v \in V} \w{v} \f{v}} \\
&\leq \max_{g=af+b\mathbf{1}} \frac{\sum_{e \in E} \w{e} \max_{u,v\in e}{a^2(\f{u}-\f{v})^2}}{\sum_{v \in V} a^2 \w{v} \f{v}^2}=\gamma_2.
\end{align*} 
\end{proof}

\section{Examples}
\label{sec:example}

We give examples of hypergraphs to show that
some properties are not satisfied.  For convenience,
we consider the properties in terms of the weighted space.
We remark that the examples could also be formulated
equivalently in the normalized space.
In our examples, the procedural minimizers are discovered by trial-and-error using programs.  However, we only
describe how to use Mathematica to 
verify them.  Our source code
can be downloaded at the following link:

\url{http://i.cs.hku.hk/~algth/project/hyper_lap/main.html}

\noindent \textbf{Verifying Procedural Minimizers.}
In our examples, we need to verify that
we have the correct value for
$\gamma_k := \min_{\vec{0} \neq f \perp_w \{f_1, f_2, \ldots, f_{k-1}\}} \D_w(f)$,
and a certain non-zero vector $f_k$ attains the minimum.

We first check that $f_k$ is perpendicular
to $\{f_1, \ldots, f_{k-1}\}$ in the weighted space,
and $\D_w(f_k)$ equals $\gamma_k$.

Then, it suffices to check that
for all $\vec{0} \neq f \perp_w \{f_1, f_2, \ldots, f_{k-1}\}$,
$\D_w(f) \geq \gamma_k$.  As the numerator in  the definition of $\D_w(f)$ involves
the maximum operator, we use a program to consider all cases of the relative
order of the nodes with respect to $f$.

For each permutation $\sigma: [n] \ra V$,
for $e \in E$, we define $S_\sigma(e) := \sigma(\max \{i: \sigma(i) \in e\})$
and $I_\sigma(e) := \sigma(\min \{i: \sigma(i) \in e\})$.

We consider the mathematical program
$P(\sigma) := \min \sum_{e \in E} w_e  \cdot (f(S_\sigma(e)) - f(I_\sigma(e)))^2
- \gamma_k \cdot \sum_{u \in V} w_u f(u)^2$ subject to
$f(\sigma(n)) \geq f(\sigma(n-1)) \geq \cdots f(\sigma(1))$
and $\forall i \in [k-1], \langle f_i, f \rangle = 0$.
Since the objective function is a polynomial, and all constraints
are linear, the Mathematica function \textsf{Minimize} can solve the
program.

Moreover, the following two statements are equivalent.
\begin{compactitem}
\item[1.] For all $\vec{0} \neq f \perp_w \{f_1, f_2, \ldots, f_{k-1}\}$,
$\D_w(f) \geq \gamma_k$.
\item[2.] For all permutations $\sigma$, $P(\sigma) \geq 0$.
\end{compactitem}

Hence, to verify the first statement, 
it suffices to use Mathematica to solve $P(\sigma)$ for all permutations $\sigma$.

\begin{example}
\label{ex:gamma_nonunique}
The sequence $\{\gamma_k\}$ generated
by the procedural minimizers is not unique.
\end{example}

\begin{proof}
Consider the following hypergraph with $5$ nodes and $5$ hyperedges each with unit weight.
\begin{figure}[H]
	\centering
	\begin{tikzpicture}
	\node (v5) at (5,1) {};
	\node (v1) at (4,1) {};
	\node (v2) at (3,1) {};
	\node (v3) at (2,1) {};
	\node (v4) at (1,1) {};

	\begin{scope}[fill opacity=0]
	\filldraw[fill=blue!70] ($(v4)+(-0.5,0)$)
	to[out=90,in=180] ($(v4)+(0,0.45)$)
	to[out=0,in=90] ($(v4)+(0.5,0)$)
	to[out=270,in=0] ($(v4)+(0,-0.45)$)
	to[out=180,in=270] ($(v4)+(-0.5,0)$);
	\filldraw[fill=blue!70] ($(v4)+(-0.55,0)$)
	to[out=90,in=180] ($(v4)+(0.5,0.75)$)
	to[out=0,in=90] ($(v4)+(1.5,0)$)
	to[out=270,in=0] ($(v4)+(0.5,-0.75)$)
	to[out=180,in=270] ($(v4)+(-0.55,0)$);
	\filldraw[fill=blue!70] ($(v4)+(-0.6,0)$)
	to[out=90,in=180] ($(v4)+(1,1)$)
	to[out=0,in=90] ($(v4)+(2.5,0)$)
	to[out=270,in=0] ($(v4)+(1,-1)$)
	to[out=180,in=270] ($(v4)+(-0.6,0)$);
	\filldraw[fill=blue!70] ($(v4)+(-0.65,0)$)
	to[out=90,in=180] ($(v4)+(1.5,1.25)$)
	to[out=0,in=90] ($(v4)+(3.5,0)$)
	to[out=270,in=0] ($(v4)+(1.5,-1.25)$)
	to[out=180,in=270] ($(v4)+(-0.65,0)$);
	\filldraw[fill=blue!70] ($(v4)+(-0.7,0)$)
	to[out=90,in=180] ($(v4)+(2,1.5)$)
	to[out=0,in=90] ($(v4)+(4.5,0)$)
	to[out=270,in=0] ($(v4)+(2,-1.5)$)
	to[out=180,in=270] ($(v4)+(-0.7,0)$);
	\end{scope}

	\foreach \v in {1,2,...,5} {
		\fill (v\v) circle (0.1);
	}
	
	\fill (v1) node [below] {$d$};
	\fill (v2) circle (0.1) node[inner sep = 6pt] [below] {$c$};
	\fill (v3) circle (0.1) node [below] {$b$};
	\fill (v4) circle (0.1) node[inner sep = 6pt] [below] {$a$};
	\fill (v5) circle (0.1) node[inner sep = 6pt] [below] {$e$};
	
	\node at ($(v4)+(-0.2,0.2)$) {$e_1$};
	\node at ($(v3)+(-0.2,0.2)$) {$e_2$};
	\node at ($(v2)+(-0.2,0.2)$) {$e_3$};
	\node at ($(v1)+(-0.2,0.2)$) {$e_4$};
	\node at ($(v5)+(-0.2,0.2)$) {$e_5$};
	\end{tikzpicture}	
\end{figure}

We have verified that different minimizers for $\gamma_2$
can lead to different values for $\gamma_3$.

\begin{table}[H]
	\centering
	\begin{tabular}{c|cc|cc}
		\toprule
		{$i$} & $\gamma_i$ & $f_i^\T$ & $\gamma_i'$ & $f_i'^\T$\\
		\hline
		1 & 0 & $(1,1,1,1,1)$ & 0 & $(1,1,1,1,1)$ \\
		2 & 5/6 & $(1,1,1,-4,-4)$ & 5/6 & $(2,2,-3,-3,-3)$ \\
		3 & 113/99 & $(2,2,-6,3,-6)$ & 181/165 & $(4,-5,-5,5,5)$ \\
		\bottomrule
	\end{tabular}
	
\end{table}
\end{proof}

\begin{example}
\label{ex:xg}
There exists a hypergraph such that $\xi_2 < \gamma_2$.
\end{example}

\begin{proof}
Consider the following hypergraph $H = (V,E)$ with
$V = \{a, b, c, d\}$ and $E = \{e_i : i \in [5]\}$.
For $i \neq 3$, edge $e_i$ has weight 1, and edge $e_3$ has weight 2.
Observe that every node has weight $3$.

\begin{figure}[H]
	\centering
	\begin{tikzpicture}
	\node (v1) at (0,2) {};
	\node (v2) at (2,2.8) {};
	\node (v3) at (2,1.2) {};
	\node (v4) at (4.5,2) {};
	
	\begin{scope}[fill opacity=0]
	\filldraw ($(v2)+(+0.5,+0.5)$) 
	to[out=-45,in=45] ($(v3)+(0.5,-0.5)$) 
	to[out=225,in=270] ($(v1)+(-1.2,0)$)
	to[out=90,in=135] ($(v2)+(+0.5,+0.5)$);
	\end{scope}
	
	\path[every node/.style={font=\sffamily\small}]
	(v1) edge node [above left] {$e_1$} (v2)
	(v2) edge node [above right] {$e_2$} (v4)
	(v3) edge node [below right] {$e_3$} (v4)
	(v1) edge [loop left] node  {$e_4$} (v1);
	
	\foreach \v in {1,2,3,4} {
		\fill (v\v) circle (0.1);
	}
	
	\fill (v1) circle (0.1) node [below right] {$a$};
	\fill (v2) circle (0.1) node [above left] {$b$};
	\fill (v3) circle (0.1) node [below right] {$c$};
	\fill (v4) circle (0.1) node [below right] {$d$};
	
	\node at (0.8,0.8) {$e_5$};
	\end{tikzpicture}
\end{figure}

We can verify 
that $\gamma_2 = \frac{2}{3}$
with the corresponding vector $f_2 := (1, 1, -1, -1)^\T$.

Recall that $\xi_2=\min_{g_1,g_2} \max_{i\in [2]} \D_w(g_i)$,
where the minimum is over all non-zero $g_1$ and $g_2$ such that
$g_1 \perp_w g_2$.
We can verify that
$\xi_2 \leq \frac{1}{3}$ by considering the
the two orthogonal vectors 
$g_1=(0,0,1,1)^{\T}$ and $g_2=(1,1,0,0)^{\T}$
in the weighted space.
\end{proof}

\begin{example}[Issues with Operator in~\cite{Louis15stoc}]
\label{eg:Louis}
Suppose $\overline{\Lo}_w$ is the operator on the weighted space
that is derived from the description in~\cite{Louis15stoc}.
Then, there exists a hypergraph
such that any minimizer $f_2$ attaining
$\gamma_2 := \min_{\vec{0} \neq f \perp_w \vec{1}} \D_w(f)$
is not an eigenvector of $\overline{\Lo}_w$ or
even $\Pi^w_{\vec{1}^{\perp_w}} \overline{\Lo}_w$.
\end{example}

\begin{proof}
We use the same hypergraph as in Example~\ref{ex:xg}.
Recall that
$\gamma_2 = \frac{2}{3}$
with the corresponding vector $f_2 := (1, 1, -1, -1)^\T$.

We next show that $f_2$ is the only minimizer, up to scalar multiplication,
attaining $\gamma_2$.

According to the definition,
$$\gamma_2 = \min_{(a,b,c,d)\perp_w 1} 
\frac{(a-b)^2+(b-d)^2+2(c-d)^2+\max_{x,y\in e_5}(x-y)^2 }{3(a^2+b^2+c^2+d^2)}.$$

Without loss of generality, we only need to consider the following three cases:

\begin{enumerate}
\item[1.]$a\geq b \geq c$: Then, by substituting $a = -b-c-d$,
\begin{align*}
& \frac{(a-b)^2+(b-d)^2+2(c-d)^2+ (a-c)^2}{3(a^2+b^2+c^2+d^2)} \geq \frac{2}{3} \\
\iff & (c-d)^2+2(b+c)^2 \geq 0,
\end{align*}
and the equality is attained only when $a=b=-c=-d$.
\item[2.]$a\geq c\geq b$: Then, by substituting $d = -a-b-c$,
\begin{align*}
&  \frac{(a-b)^2+(b-d)^2+2(c-d)^2+ (a-b)^2}{3(a^2+b^2+c^2+d^2)} \geq \frac{2}{3} \\
\iff & (a+2b+c)^2+8c^2+4(a-c)(c-b) \geq 0,
\end{align*}
and the equality cannot be attained.
\item[3.]$b\geq a\geq c$: Then, by substituting $d = -a-b-c$,
\begin{align*}
&  \frac{(a-b)^2+(b-d)^2+2(c-d)^2+ (b-c)^2}{3(a^2+b^2+c^2+d^2)} \geq \frac{2}{3} \\
\iff & 4(b+c)^2+2(a+c)^2+2(b-a)(a-c) \geq 0,
\end{align*}
and the equality is attained only when $a=b=-c=-d$.
\end{enumerate}

Therefore, all minimizers attaining $\gamma_2$ must be in $\spn(f_2)$.

We next showt that $f_2$ is not an eigenvector of $\Pi^w_{\vec{1}^{\perp_w}} \overline{\Lo}_w$.
Observe that only the hyperedge $e_5 = \{a, b, c\}$
involves more than 2 nodes.  In this case,
the weight of $e_5$ is distributed evenly between
$\{a, c\}$ and $\{b,c\}$.  All other edges keep their weights.
Hence, the resulting weighted adjacency matrix $\overline{A}$ and
$\I - \Wm \overline{A}$ are as follows:

$\overline{A} = \begin{pmatrix}
\frac{3}{2} & 1 & \frac{1}{2} & 0 \\
1 & \frac{1}{2} & \frac{1}{2} & 1 \\
\frac{1}{2} & \frac{1}{2} & 0 & 2 \\
0 & 1 & 2 & 0
\end{pmatrix}$
and 
$\I - \Wm \overline{A} = \begin{pmatrix*}[r]
\frac{1}{2} & -\frac{1}{3} & -\frac{1}{6} & 0 \\
-\frac{1}{3} & \frac{5}{6} & -\frac{1}{6} & -\frac{1}{3} \\
-\frac{1}{6} & -\frac{1}{6} & 1 & -\frac{2}{3} \\
0 & -\frac{1}{3} & -\frac{2}{3} & 1
\end{pmatrix*}$

Hence, $\overline{\Lo}_w f_2 = (\I - \Wm \overline{A}) f_2
= (\frac{1}{3}, 1, -\frac{2}{3}, - \frac{2}{3})^\T \notin \spn(f_2)$.
Moreover, $\Pi^w_{\vec{1}^{\perp_w}} \overline{\Lo}_w f_2 =
(\frac{1}{3}, 1, -\frac{2}{3}, -\frac{2}{3})^\T
\notin \spn(f_2)$.

In comparison, in our approach,
since $b$ is already connected to $d$ with edge $e_2$ of weight 1,
it follows that the weight of $e_5$ should all go to the pair $\{a,c\}$.  Hence, the resulting adjacency matrix is:

$A = \begin{pmatrix}
1 & 1 & 1 & 0 \\
1 & 1 & 0 & 1 \\
1 & 0 & 0 & 2 \\
0 & 1 & 2 & 0
\end{pmatrix}.$

One can verify that $\Lo_w f_2 = (\I - \Wm A)f_2 = \frac{2}{3} f_2$, as claimed in Theorem~\ref{th:hyper_lap}.
\end{proof}

\begin{example}[Third minimizer not eigenvector of Laplacian]
\label{eg:gamma3}
There exists a hypergraph such that for all procedural minimizers $\{(f_i, \gamma_i)\}_{i\in [3]}$
of $\D_w$, the vector~$f_3$ 
is not an eigenvector of $\Lo_w$ or even
$\Pi^w_{F_2^{\perp_w}} \Lo_w$,
where $\Lo_w$ is the operator on the weighted space
defined in Lemma~\ref{lemma:define_lap},
and $F_2 := \{f_1, f_2\}$.
\end{example}

\begin{proof}
Consider the following hypergraph with $4$ nodes and $2$ hyperedges each with unit weight.
\begin{figure}[H]
	\centering
	\begin{tikzpicture}
	\node (v1) at (0,0) {};
	\node (v2) at (1.5,0.866) {};
	\node (v3) at (1.5,-0.866) {};
	\node (v4) at (-2,0) {};
	
	\draw (1,0) circle (1.725);	
	
	\path[every node/.style={font=\sffamily\small}]
		(v1) edge node [above left] {$e_1$} (v4);
	
	\foreach \v in {1,2,3,4} {
		\fill (v\v) circle (0.1);
	}
	
	\fill (v1) circle (0.1) node [below right] {$b$};
	\fill (v2) circle (0.1) node [below right] {$c$};
	\fill (v3) circle (0.1) node [below right] {$d$};
	\fill (v4) circle (0.1) node [below right] {$a$};
	
	\node at (0.6,-1) {$e_2$};
	\end{tikzpicture}
\end{figure}	

We can verify the first 3 procedural minimizers.

\begin{table}[H]
	\centering
	\begin{tabular}{l|cc}
		\toprule
		{$i$} & $\gamma_i$ & $f_i^\T$ \\
		\hline
		1 & 0 & $(1,1,1,1)$  \\
		2 & $\frac{5-\sqrt{5}}{4}$ & $(\sqrt{5}-1,\frac{3-\sqrt{5}}{2},-1,-1)$  \\
		3 & $\frac{11+\sqrt{5}}{8}$ & $(\sqrt{5}-1, -1, 4-\sqrt{5}, -1)$  \\
		3$'$& $\frac{11+\sqrt{5}}{8}$ & $(\sqrt{5}-1, -1, -1, 4-\sqrt{5})$ \\
		\bottomrule
	\end{tabular}
\end{table}

We next show that $f_3$ and $f_{3'}$ are the only minimizers, up to scalar multiplication,
attaining $\gamma_3$.

According to the definition,
$$\gamma_2 = \min_{(a,b,c,d)\perp_w 1}\frac{(a-b)^2+\max_{x,y \in e_2}(x-y)^2}{a^2+2b^2+c^2+d^2}.$$

Observe that $c$ and $d$ are symmetric, we only need to consider the following two cases,
\begin{enumerate}
	\item[1.]$c\geq b \geq d$: Then, by substituting $a=-2b-c-d$,
	\begin{align*}
		& \frac{(a-b)^2+(c-d)^2}{a^2+2b^2+c^2+d^2} \geq 1 \\
		\iff & 5b^2 + 2(c-b)(b-d) \geq 0.
	\end{align*}
	\item[2.]$b \geq c \geq d$: Then, by substituting $a=-2b-c-d$,
	\begin{align*}
		& \frac{(a-b)^2+(b-d)^2}{a^2+2b^2+c^2+d^2} \geq \frac{5-\sqrt{5}}{4} \\
		\iff & (5+3\sqrt{5})b^2+(\sqrt{5}-3)c^2+(\sqrt{5}-1)d^2 + (2\sqrt{5}+2)bc+(2\sqrt{5}-2)bd+(\sqrt{5}-1)cd \geq 0.
	\end{align*}
	Let $f(b,c,d)$ denotes the function above.
	Since $f$ is a quadratic function of $c$ and the coefficient of $c^2$ is negative, the minimum must be achieved when $c=b \text{ or } d$.
	In other words,$f(b,c,d)\geq \min \{f(b,b,d), f(b,d,d)\}.$ Note that
	\begin{align*}
		& f(b,b,d) = (6\sqrt{5}+4)b^2+(3\sqrt{5}-3)bd+(\sqrt{5}-1)d^2 \geq 0 \\
		\text{and } & f(b,d,d) = (5+3\sqrt{5})b^2 +4\sqrt{5}bd+(3\sqrt{5}-5)d^2 \geq 0.
	\end{align*}
	and the equality holds only when $c=d=-\frac{3+\sqrt{5}}{2}b$.
\end{enumerate}
Therefore, $\gamma_2 = \frac{5-\sqrt{5}}{4}, f_2^{T}=(\sqrt{5}-1,\frac{3-\sqrt{5}}{2},-1,-1)$.

Now we are ready to calculate $\gamma_3.$
$$\gamma_3 = \min_{(a,b,c,d)\perp_w 1,f_2}\frac{(a-b)^2+\max_{x,y \in e_2}(x-y)^2}{a^2+2b^2+c^2+d^2}.$$
Note that,
$$(a,b,c,d)\perp_w \vec{1},f_2 \iff
\begin{cases}
a+2d+c+d = 0 \\
(\sqrt{5}-1)a+(3-\sqrt{5})b-c-d=0
\end{cases} \iff
\begin{cases}
a = (1-\sqrt{5})b \\
c+d = (\sqrt{5}-3)b
\end{cases}
$$
\begin{enumerate}
	\item[1.]$c\geq b \geq d$: which is equivalent to $c \geq -\frac{\sqrt{5}+3}{4}(c+d) \geq d,$ then
	\begin{align*}
		& \frac{(a-b)^2+(c-d)^2}{a^2+2b^2+c^2+d^2} \geq \frac{11+\sqrt{5}}{8} \\
		\iff & (c-\frac{\sqrt{5}+3}{4}(c+d))(d-\frac{\sqrt{5}+3}{4}(c+d)) \leq 0.
	\end{align*} 
	\item[2.]$b\geq c \geq d$: which is equivalent to $(4-\sqrt{5})b+d \geq 0 \geq (3-\sqrt{5})b+2d,$ then
	\begin{align*}
		& \frac{(a-b)^2+(b-d)^2}{a^2+2b^2+c^2+d^2} \geq \frac{11+\sqrt{5}}{8} \\
		\iff & ((4-\sqrt{5})b + d)((3+\sqrt{5})((3-\sqrt{5})b+2d)- (\sqrt{5}-1)((4-\sqrt{5})b+d)) \leq 0.
	\end{align*}
\end{enumerate}
Therefore, $\gamma_3=\frac{11+\sqrt{5}}{8}$, and the corresponding $f_3^{T}=(\sqrt{5}-1, -1, 4-\sqrt{5}, -1) \text{ or } (\sqrt{5}-1, -1, -1, 4-\sqrt{5}).$

We let $f= f_3 = (\sqrt{5}-1, -1, 4-\sqrt{5}, -1)^\T$,
and we apply the procedure described in Lemma~\ref{lemma:define_lap} to compute $\Lo_w f$.

Observe that $w_a = w_c = w_d = 1$ and $w_b=2$,
and $f(b) = f(d) < f(a) < f(c)$.

For edge $e_1$, $\Delta_1 = f(a) - f(b) = \sqrt{5}$ and $c_1 = w_1 \cdot \Delta_1 = \sqrt{5}$.
For edge $e_2$, $\Delta_2 = f(c) - f(b) = 5 - \sqrt{5}$,
and $c_2 = w_2 \cdot \Delta_2 = 5 - \sqrt{5}$.
Hence, $r_a = - \frac{c_1}{w_a}$, $r_c = - \frac{c_2}{w_c}$,
and $r_b = r_d = \frac{c_1+c_2}{w_b + w_d}$.

Therefore, $\Lo_w f = - r = 
(\sqrt{5}, - \frac{5}{3}, 5 - \sqrt{5},
- \frac{5}{3})^\T$.

Moreover, $\Pi^w_{F_2^{\perp_w}} \Lo_w f =
(-\frac{1}{2} + \frac{7}{6} \cdot \sqrt{5},
-\frac{4}{3} - \frac{1}{6} \cdot \sqrt{5},
\frac{59}{12} - \frac{11}{12} \cdot \sqrt{5},
-\frac{7}{4} + \frac{1}{12} \cdot \sqrt{5})^\T
\notin \spn(f)$.

The case when $f_3 = (\sqrt{5}-1, -1, -1, 4-\sqrt{5})^\T$
is similar, with the roles of $c$ and $d$ reversed.
\end{proof}

\end{document}